\newcommand{\ignore}[1]{}
\definecolor{corlinks}{RGB}{64,128,128}
\definecolor{cormenu}{RGB}{0,37,94}
\definecolor{corurl}{RGB}{0,46,91}
\newcommand{\1}{\mathds{1}}
\newcommand{\err}{\mathsf{err}}
\newcommand{\D}{\mathcal{D}}
\newcommand{\F}{\mathbb{F}}
\newcommand{\calf}{\mathcal{F}}
\newcommand{\zo}{\{0, 1\}}
\newcommand{\Ex}{\mathbb E}
\newcommand{\cald}{\mathcal{D}}
\newcommand{\Disc}{\textsc{Disc}}
\newenvironment{proofof}[1]{\noindent{\bf Proof}
of #1:\hspace*{1em}}{\qed\bigskip}
\newtheorem{fact}{Fact}[section]
\newtheorem{definition}[fact]{Definition}
\newtheorem{theorem}[fact]{Theorem}
\newtheorem{lemma}[fact]{Lemma}
\newtheorem{corollary}[fact]{Corollary}
\newtheorem{proposition}[fact]{Proposition}
\newtheorem{claim}[fact]{Claim}
\newcommand{\CC}{\mathsf{CC}} 
\newcommand{\CCU}{\mathsf{CCU}} 
\newcommand{\owCC}{\mathsf{owCC}} 
\newcommand{\owCCU}{\mathsf{owCCU}} 
\newcommand{\owF}{\mathsf{ow}\calf} 
\newcommand{\agree}{\textsc{Agree}}
\newcommand{\Inote}[1]{}
\newcommand{\Pnote}[1]{}
\newcommand{\Mnote}[1]{}
\title{Communication with Contextual Uncertainty} \author{
Badih Ghazi \thanks{Computer Science and Artificial Intelligence Laboratory, Massachusetts Institute of Technology, Cambridge MA 02139. Supported in part by NSF STC Award CCF 0939370 and NSF Award CCF-1217423.  {\tt badih@mit.edu}.}
\and
 Ilan Komargodski
  \thanks{Weizmann Institute of Science, Israel. Email:
  {\tt
      ilan.komargodski@weizmann.ac.il}. Work done while an intern at MSR New
    England.
    Supported in part by a grant from the I-CORE Program of the Planning and
    Budgeting Committee, the Israel Science Foundation, BSF and the Israeli
    Ministry of Science and Technology.}
\and Pravesh Kothari \thanks{
    UT Austin, USA. Email: {\tt kothari@cs.utexas.edu}. Work done while an
    intern at MSR New England.} 
    \and Madhu Sudan \thanks{Microsoft Research, One
    Memorial Drive, Cambridge, MA 02142, USA. Email: {\tt madhu@mit.edu}.}}
\date{\today}
\begin{document}
\maketitle

\begin{abstract}
We introduce a simple model illustrating the role of context in communication
and the challenge posed by uncertainty of knowledge of context. We consider a
variant of distributional communication complexity where Alice gets some
information $x$ and Bob gets $y$, where $(x,y)$ is drawn from a known
distribution, and Bob wishes to compute some function $g(x,y)$ (with
high probability over $(x,y)$). In our variant, Alice does not know $g$, but only knows some
function $f$ which is an approximation of $g$. Thus, the function being computed
forms the context for the communication, and knowing it imperfectly models
(mild) uncertainty in this context.

A naive solution would be for Alice and Bob to first agree on some common
function $h$ that is close to both $f$ and $g$ and then use a protocol for $h$
to compute $h(x,y)$. We show that any such agreement leads to a large overhead
in communication ruling out such a universal solution.

In contrast, we show
that if $g$ has a one-way communication protocol with complexity $k$ in the
standard setting, then it has a communication protocol with complexity $O(k \cdot (1+I))$ in the uncertain setting, where $I$ denotes the mutual information between $x$ and $y$. In the particular case where the input distribution is a product distribution, the protocol in the uncertain setting only incurs a constant factor blow-up in communication and error.

Furthermore, we show that the dependence on the mutual information $I$ is required. Namely, we construct a class of functions along with a non-product distribution over $(x,y)$ for which the communication complexity is a single bit in the standard setting but at least $\Omega(\sqrt{n})$ bits in the uncertain setting.
\end{abstract}

\thispagestyle{empty}

\newpage

\pagenumbering{arabic}

\section{Introduction}

Most forms of communication involve communicating players that share a large common context
and use this context to compress communication. In natural settings, the context may include understanding
of language, and knowledge of the environment and laws. In designed (computer-to-computer) settings, the context includes knowledge of the operating system, communication protocols, and encoding/decoding mechanisms. Remarkably, especially in the natural setting, context can seemingly be used to compress communication, even when it is not shared perfectly.
This ability to communicate despite a major source of uncertainty has led to a series of works attempting to model
various forms of communication amid uncertainty,
starting with Goldreich, Juba and Sudan~\cite{JS1,GJS:jacm} followed by
\cite{JKKS11,JS2,JubaWilliams13,HaramatyS,CGMS15}.
This current work introduces a new theme to this series of works by introducing a functional notion
of uncertainty and studying this model. We start by describing our model and results below and then contrast our model with some of the previous works.

\paragraph{Model.}
Our model builds upon the classical setup of communication complexity due to Yao~\cite{Yao}, and we develop it here.
The classical model considers two interacting players Alice and Bob each possessing some private information $x$
and $y$ with $x$ known only to Alice and $y$ to Bob. They wish to compute some joint
function $g(x,y)$ and would like to do so while exchanging the minimum possible number of
bits. In this work, we suggest that the function $g$ is the {\em context} of the communication
and consider a setting where it is shared imperfectly. Specifically, we say that
Bob knows the function $g$ and Alice knows some approximation $f$ to $g$ (with $f$ not being known to Bob). This leads to the question: when can
Alice and Bob interact to compute $g(x,y)$ with limited communication ?

It is clear that if $x\in \{0,1\}^n$, then $n$ bits of communication suffice --- Alice can
simply ignore $f$ and send $x$ to Bob. We wish to consider settings that improve on this.
To do so correctly on every input, a necessary condition is that $g$ must have low communication complexity in the standard model. However, this necessary condition does not seem to be sufficient --- since Alice only has an approximation $f$ to $g$. Thus, we settle for a weaker goal: determining $g$ correctly only on most inputs. This puts us in a distributional communication complexity setting. A necessary condition now is that $g$ must have a low-error low-communication protocol in the standard setting. The question is then: can $g$ be computed with
low error and low communication when Alice only knows an approximation $f$ to $g$ (with $f$ being unknown to Bob) ?

More precisely, in this setting, the input to Alice is a pair $(f,x)$ and the input to Bob is a pair $(g,y)$.
The functions $(f,g)$ are adversarially chosen subject to the restrictions that they are close
to each other (under some distribution $\mu$ on the inputs) and that $g$ (and hence $f$) has a low-error
low-communication protocol. The pair $(x,y)$ is drawn from the distribution $\mu$ (independent
of the choice of $f$ and $g$). The players both know $\mu$ in addition to their respective inputs.

\paragraph{Results.}
In order to describe our results, we first introduce some notation. Let $\delta_\mu(f,g)$ denote the (weighted and normalized)
Hamming distance between $f$ and $g$ with respect to the distribution $\mu$. Let $\CC^\mu_\epsilon(f)$ denote the minimum
communication complexity of a protocol computing $f$ correctly on all but an $\epsilon$ fraction of the inputs.
Let $\owCC^\mu_\epsilon(f)$ denote the corresponding  {\em one-way} communication complexity of $f$.
Given a family $\calf$ of pairs of functions $(f,g)$, we denote the uncertain complexity
$\CCU^\mu_\epsilon(\calf)$ to be the minimum over all public-coin protocols $\Pi$ of the
maximum over $(f,g) \in \calf$, $(x,y)$ in the support of $\mu$ and settings of public coins, of the communication cost of $\Pi$, subject to the condition that for every $(f,g) \in \calf$, $\Pi$ outputs $g(x,y)$ with probability $1-\epsilon$ over the choice of $(x,y)$ and the shared randomness. That is,
\begin{align*}
  \CCU^\mu_\epsilon(\calf) \triangleq \min_{\{\Pi \; | \; \forall (f,g)\in\calf:\; \delta_\mu(\Pi,g)\leq
    \epsilon\}}\max_{\{(f,g) \in \calf, (x,y)\in\mathsf{supp}(\mu),\text{ public coins}\}}\{
  \text{Comm.\ cost of }\Pi((f,x),(g,y)) \}.
\end{align*}
Similarly, let $\owCCU^\mu_\epsilon(\calf)$  denote the one-way uncertain communication complexity of
$\calf$.

Our first result (Theorem~\ref{thm:one-way}) shows that if $\mu$ is a distribution on which $f$ and $g$ are close and each has a one-way protocol with communication $k$ bits in the standard model, then the pair $(f,g)$ has one-way uncertain communication complexity of at most $O(k\cdot(1+I))$ bits with $I$ being the mutual information\footnote{Given a distribution $\mu$ over a pair $(X,Y)$ of random variables with marginals $\mu_X$ and $\mu_Y$ over $X$ and $Y$ respectively, the \emph{mutual information} of $X$ and $Y$ is defined as $I(X;Y) \triangleq \Ex_{(x,y) \sim \mu}[\log(\frac{\mu(x,y)}{\mu_X(x) \mu_Y(y)})]$.} of $(x,y) \sim \mu$. More precisely, let $\owF_{k,\epsilon,\delta}$ denote the family of all pairs of functions
$(f,g)$ with $\owCC^\mu_\epsilon(f), \owCC^\mu_\epsilon(g) \leq k$ and
$\delta_\mu(f,g) \leq \delta$. We prove the following theorem.

\begin{theorem}\label{thm:one-way}
There exists an absolute constant $c$ such that for every pair of finite sets $X$ and $Y$, every distribution $\mu$ over $X \times Y$ and every $\theta > 0$, it holds that
  \begin{equation}\label{eq:main_up_bd_intro}
    \owCCU_{\epsilon + 2\delta + \theta}^{\mu}(\owF^\mu_{k,\epsilon,\delta})
    \leq \frac{c \left(k+\log\left(\frac{1}{\theta}\right)\right)}{\theta^2} \cdot \left(1+\frac{I(X;Y)}{\theta^2}\right).
  \end{equation}
\end{theorem}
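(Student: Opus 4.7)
The plan is to design an uncertain protocol that leverages the standard one-way protocols for $f$ and $g$ together with public randomness in a hashing-based scheme, with the number of hash bits calibrated to the mutual information $I(X;Y)$.

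\paragraph{Setup.} Fix one-way protocols $\Pi_f = (m_f, d_f)$ and $\Pi_g = (m_g, d_g)$ for $f$ and $g$, each of communication cost at most $k$ and with error at most $\epsilon$ under $\mu$. Since $\delta_\mu(f,g) \leq \delta$, a triangle inequality yields
\[
\pr_{(x,y) \sim \mu}\bigl[d_f(m_f(x), y) \neq d_g(m_g(x), y)\bigr] \leq 2\epsilon + \delta.
\]
Hence it suffices to design an uncertain protocol whose output agrees with one of these two decoded values with probability at least $1 - \theta$; combined with the above, this yields the target error $\epsilon + 2\delta + \theta$.

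\paragraph{Main approach.} Alice and Bob use public randomness to sample a random hash $h : X \to \{0,1\}^T$ for a $T$ chosen below. Alice sends $h(x)$, possibly with some auxiliary bits derived from $\Pi_f$. Bob searches over candidates $x^* \in X$ for one that (i) matches the hash $h(x^*) = h(x)$, and (ii) is plausible under the posterior $\mu_{X \mid Y = y}$ given his knowledge of $g$; he then outputs $d_g(m_g(x^*), y)$. Correctness reduces to the statement that with probability at least $1 - \theta$, every candidate $x^*$ satisfying (i) and (ii) lies in the same $m_g$-cell as $x$, so that $d_g(m_g(x^*), y) = d_g(m_g(x), y)$.

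\paragraph{Choice of $T$ and role of $I(X;Y)$.} Since $m_g$ partitions $X$ into only $2^k$ cells, one would expect $T = O(k + \log(1/\theta))$ bits to suffice whenever $\mu_{X \mid Y=y}$ is close to $\mu_X$, i.e., in the regime $I(X;Y) = 0$. In general, the conditional distribution is tilted by the likelihood ratio $\mu(x,y)/(\mu_X(x)\mu_Y(y))$, whose expected logarithm is exactly $I(X;Y)$. A second-moment argument on the collision count -- the $\mu$-weighted number of spurious $x^*$'s that match the hash yet lie in a wrong cell -- introduces the factor $(1 + I(X;Y)/\theta^2)/\theta^2$, yielding
\[
T \,=\, O\!\left(\frac{k + \log(1/\theta)}{\theta^2} \cdot \bigl(1 + I(X;Y)/\theta^2\bigr)\right)
\]
via Chebyshev's inequality.

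\paragraph{Main obstacle.} The most delicate step is the second-moment analysis of the collision count in the presence of nontrivial mutual information. Bounding the variance of the weighted collision count in terms of $I(X;Y)$ -- rather than by a cruder quantity like $H(X)$ -- is the key quantitative step, and I would expect it to use the variance of the log-likelihood ratio together with the $2^k$ bound on the number of $m_g$-cells. Once this is in place, Chebyshev's inequality and the triangle inequality from the Setup close out the argument.
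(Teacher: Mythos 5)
Your proposal takes a genuinely different route from the paper, and unfortunately the route has a fundamental flaw rather than just an unfinished step. The paper's protocol never tries to tell Bob anything about $x$ itself: Alice and Bob use (a one-way, distributional variant of) correlated sampling to agree on samples $y_1,\dots,y_m \sim \mu_{Y\mid x}$, Alice sends the $m$ bits $f_x(y_1),\dots,f_x(y_m)$, and Bob compares this fingerprint against the $L \le 2^k$ candidate decoding functions $B_i\colon Y \to \zo$ arising from the cells of the one-way protocol for $g$, outputting $B_{i_{\min}}(y)$ for the best match. The union bound is over the $2^k$ cells, which is why $m = O((k+\log(1/\theta))/\theta^2)$ suffices, and the $I(X;Y)$ factor enters solely as the price of correlated sampling from $\mu_{Y\mid x}$ (which is free in the product case).

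Your scheme instead asks Bob to recover (the $m_g$-cell of) $x$ from a random hash $h(x)$ by searching over candidates $x^*$ that are plausible under $\mu_{X\mid Y=y}$. For correctness you need every hash-surviving plausible candidate to land in $x$'s cell (or at least in a cell whose decoder agrees with $B_{m_g(x)}$ at $y$). The number of plausible candidates is roughly $2^{H(X\mid Y)}$, and a typical wrong cell disagrees with the right one on a constant fraction of $y$'s, so you must kill essentially all of the $\approx 2^{H(X\mid Y)}$ spurious candidates; each survives with probability $2^{-T}$, forcing $T \gtrsim H(X\mid Y)$. In the product case $H(X\mid Y) = H(X)$ can be as large as $n$, so precisely where the theorem is strongest (Corollary 1.2, where the answer is $O(k/\theta^2)$) your hash length is no better than sending $x$ outright. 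Relatedly, the role of $I(X;Y)$ in your analysis is inverted: a larger $I(X;Y)$ concentrates Bob's posterior and makes your search \emph{easier}, whereas the theorem's bound \emph{grows} with $I(X;Y)$; no second-moment argument on collision counts can reconcile this. The missing idea is to communicate evaluations of $f_x$ on points Bob also holds (obtained via correlated sampling) rather than information identifying $x$, so that the union bound runs over the $2^k$ protocol cells instead of over candidates for $x$.
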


In the special case where $\mu$ is a product distribution, then $I(X;Y) = 0$ and we obtain the following particularly interesting corollary of \Cref{thm:one-way}.

\begin{corollary}\label{cor:one-way_product}
  There exists an absolute constant $c$ such that for every pair of finite sets $X$ and $Y$, every \emph{product} distribution $\mu$ over $X \times Y$ and every $\theta > 0$, it holds that
  \begin{equation*}\label{eq:main_up_bd_cor}
    \owCCU_{\epsilon + 2\delta + \theta}^{\mu}(\owF^\mu_{k,\epsilon,\delta})
    \leq \frac{c \left(k+\log\left(\frac{1}{\theta}\right)\right)}{\theta^2}.
  \end{equation*}
\end{corollary}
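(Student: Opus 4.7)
The plan is to derive \Cref{cor:one-way_product} directly from \Cref{thm:one-way} by specializing to the product case, which I expect to be essentially immediate. The only input I need is the standard identity that for a product distribution $\mu = \mu_X \times \mu_Y$ on $X \times Y$, the mutual information $I(X;Y)$ vanishes: since $\mu(x,y) = \mu_X(x)\mu_Y(y)$, every summand in the definition
\[
I(X;Y) = \Ex_{(x,y) \sim \mu}\bkets{\log \frac{\mu(x,y)}{\mu_X(x)\mu_Y(y)}}
\]
equals $\log 1 = 0$, so the expectation is $0$.

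Substituting $I(X;Y) = 0$ into the bound \eqref{eq:main_up_bd_intro} of \Cref{thm:one-way} collapses the factor $\bkets{1 + I(X;Y)/\theta^2}$ to $1$, producing exactly the claimed bound $c \bkets{k + \log(1/\theta)}/\theta^2$, with the error parameter $\epsilon + 2\delta + \theta$ and the family $\owF^\mu_{k,\epsilon,\delta}$ carried over verbatim. There is no real obstacle here: the corollary is a literal product-distribution instantiation of the theorem. All of the substantive content — designing a one-way uncertain protocol whose cost scales with $k$ and with the mutual information $I(X;Y)$, and choosing the parameters so that the final error lands at $\epsilon + 2\delta + \theta$ — is entirely contained in the proof of \Cref{thm:one-way}; the product case merely removes the $I(X;Y)$ factor from the resulting bound.
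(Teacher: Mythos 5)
Your proposal is correct and matches the paper exactly: the corollary is obtained by observing that $I(X;Y)=0$ for a product distribution and substituting into the bound of Theorem~\ref{thm:one-way}. Nothing further is needed.
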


In words, Corollary~\ref{cor:one-way_product} says that for product distributions and for constant error probabilities, communication in the uncertain model is only a constant factor larger than in the standard model.

Our result is significant in that it achieves (moderately) reliable communication despite uncertainty
about the context, even when the uncertainty itself is hard to resolve. To elaborate on this statement,
note that one hope for
achieving a low-communication protocol for $g$ would be for Alice and Bob to first agree on some function
$q$ that is close to $f$ and $g$, and then apply some low-communication protocol for this common
function $q$. This would be the ``resolve the uncertainty first'' approach. We prove (Theorem~\ref{thm:LB}) that resolving the uncertainty can be very expensive (much more so than even the trivial protocol of sending $x$) and hence, this would not be a way to prove Theorem~\ref{thm:one-way}. Instead, we show a path around the inherent
uncertainty to computing the desired function, and this leads to a proof of Theorem~\ref{thm:one-way}. To handle non-product distributions in Theorem~\ref{thm:one-way}, we in particular use a \emph{one-way distributional} variant of the correlated sampling protocol of Braverman and Rao \cite{braverman2011information}. For a high-level overview of the proof of Theorem~\ref{thm:one-way}, we refer the reader to Section~\ref{subsec:overv_prot}.

We now describe our lower bound. Given the upper bound in Theorem~\ref{thm:one-way}, a natural question is whether the dependence on $I(X;Y)$ in the right-hand side of \Cref{eq:main_up_bd_intro} is actually needed. In other words, is it also the case that for non-product distributions, contextual uncertainty can only cause a constant-factor blow-up in communication (for constant error probabilities) ? Perhaps surprisingly, the answer to this question turns out to be negative. Namely, we show that a dependence of the communication in the uncertain setting on $I(X;Y)$ is required.

\begin{theorem}\label{thm:main_lb_non_prod}
There exist a distribution $\mu$ and a function class $\calf \subseteq \owF^\mu_{1,0,\delta}$ such that for every $\epsilon > 0$,
$$ \CCU^{\mu}_{\frac12 - \epsilon}(\calf) \geq 
\Omega( \sqrt{\delta n}) -\log(1/\epsilon). $$
\end{theorem}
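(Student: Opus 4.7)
The plan is to prove the lower bound by exhibiting an explicit hard instance $(\mu, \calf)$ and reducing from a classical hard communication problem.

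\textbf{Construction.} Motivated by the linear dependence on the mutual information $I(X;Y)$ in the upper bound of Theorem~\ref{thm:one-way}, I choose $\mu$ so that $I(X;Y) = \Theta(\sqrt{\delta n})$, making the upper and lower bounds match in order. A natural candidate is $x\in\{0,1\}^n$ uniform and $y = x \oplus e$ where $e$ is a noise vector of prescribed sparsity. I parameterize $\calf$ by vectors $s$ drawn from a carefully chosen subset $V \subseteq \{0,1\}^n$, setting $g_s(x,y) = \ip{s}{x} \oplus \ip{s}{y}$, which simplifies to $\ip{s}{e}$. Each $g_s$ admits a zero-error one-bit one-way protocol (Alice sends $\ip{s}{x}$, Bob computes $\ip{s}{y}$ and XORs), so $\calf \subseteq \owF^{\mu}_{1,0,\delta}$ as long as pairs $(g_s, g_t)$ stay close in $\mu$. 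A routine computation shows that $(g_s, g_t)$ is $\delta$-close under $\mu$ precisely when $s \oplus t$ is sufficiently sparse in the sense determined by the distribution of $e$, so $V$ becomes a combinatorial ``ball'' of size $2^{\Theta(\sqrt{\delta n})}$ around each anchor.

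\textbf{Lower bound via reduction.} Given any protocol for $\calf$ in the uncertain setting with communication cost $b$ and error $\frac12 - \epsilon$, I reduce to a sparse variant of Gap Inner Product or Gap-Hamming Distance, for which $\sqrt{n}$-type lower bounds are classical. The intuition is that Alice knows $s$ but not Bob's $t$, and so her message must effectively let Bob recover $\ip{t}{e}$ for any $t$ in a large neighborhood of $s$; but no short message can serve the entire neighborhood, since the functions $\{g_t\}_{t \in V}$ are ``spread out'' in a discrepancy sense under $\mu$. A rectangle or discrepancy argument then lower bounds $b$ by $\Omega(\sqrt{\delta n}) - O(\log 1/\epsilon)$, where the $\log(1/\epsilon)$ correction accounts for the cost of amplifying $\epsilon$-advantage over random to constant advantage via repetition.

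\textbf{Main obstacle.} The crux is balancing three competing requirements simultaneously: (i) each $g_s \in \calf$ must have 1-bit zero-error one-way complexity under $\mu$; (ii) the set of $\delta$-close pairs must be rich enough to embed a hard instance, so that $|V|$ is exponential in $\sqrt{\delta n}$; and (iii) the lower bound must apply to two-way interactive protocols, not just one-way. The last point is the most delicate: two-way protocols can be substantially more powerful than one-way ones and typically demand a round-elimination or information-cost argument rather than a simple counting bound. Choosing the distribution of $e$ (and hence the tradeoff between $|V|$ and closeness) so that all three constraints hold simultaneously is the technical heart of the proof.
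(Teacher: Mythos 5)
Your construction is exactly the paper's: $\mu$ is the $p$-noisy hypercube, the functions are the parities $f_S(x,y)=\oplus_{i\in S}(x_i\oplus y_i)=\ip{S}{x\oplus y}$, closeness of $(f_S,f_T)$ under $\mu$ corresponds to sparsity of $S\triangle T$, and each $f_S$ has a one-bit zero-error one-way protocol. So the instance is right. The gap is in the lower-bound half, where your proposal stops at the point where the paper's actual work begins. First, you never say how the adversarial quantifier over $(f,g)\in\calf$ is discharged. The paper does this by placing a distribution on the function pair itself: $S$ uniform, $T$ a $(q/2)$-noisy copy of $S$, which by a Chernoff bound lands in $\calf_q$ except with probability $2^{-\Theta(qn)}$; any uncertain protocol for $\calf_q$ then $(\frac12-\epsilon+2^{-\Theta(qn)})$-computes the single function $F((S,x),(T,y))=f_T(x\oplus y)$ in the \emph{standard} distributional model under $\nu_{p,q}=\cald_q\times\mu_p$. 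Your phrasing (``her message must let Bob recover $\ip{t}{e}$ for every $t$ in a neighborhood of $s$'') is a one-way covering intuition that does not by itself yield a two-way bound.

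Second, the assertion that the resulting problem ``reduces to a sparse variant of Gap Inner Product or Gap-Hamming Distance, for which $\sqrt n$-type lower bounds are classical'' is not substantiated and is not what the paper does: there is no black-box reduction to a known hard problem here, and the distribution $\nu_{p,q}$ is correlated in a way that known GHD/IP bounds do not cover. The paper instead proves the discrepancy bound $\Disc_{\nu_p}(F)\le 2^{-\gamma pn}$ from scratch, exploiting the fact that the signed communication matrix factors as $\frac{1}{2^{2n}}(1-p)^{2n}N^{\otimes n}$ for an explicit $4\times4$ matrix $N(a)$, $a=p/(1-p)$, and computing $\|N\|\le 1+\sqrt2\,a+O(a^2)$; the whole bound hinges on the coefficient $\sqrt2$ being strictly less than $2$, so that $(1-p)^{2n}\|N\|^n\le(1-(2-\sqrt2)p+O(p^2))^n$ decays. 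This spectral computation is the technical heart of the proof and is entirely absent from your sketch. Finally, you misidentify the main obstacle: once one commits to discrepancy, the two-way nature of the protocol is handled automatically (discrepancy lower-bounds arbitrary interactive protocols), so no round elimination or information-cost argument is needed; and the $\log(1/\epsilon)$ loss comes directly from the bound $\CC^{\nu}_{\frac12-\epsilon}(F)\ge\log(2\epsilon/\Disc_\nu(F))$, not from error amplification.
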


In particular, if $\delta$ is any small constant (e.g., $1/5$), then Theorem~\ref{thm:main_lb_non_prod} asserts the existence of a distribution and a class of distance-$\delta$ functions for which the zero-error (one-way) communication complexity in the standard model is a single bit, but under contextual uncertainty, any two-way protocol (with an arbitrary number of rounds of interaction) having a noticeable advantage over random guessing requires $\Omega(\sqrt{n})$ bits of communication! We note that the distribution $\mu$ in Theorem~\ref{thm:main_lb_non_prod} has mutual information $\approx n$, so Theorem~\ref{thm:main_lb_non_prod} rules out improving the dependence on the mutual information in \Cref{eq:main_up_bd_intro} to anything smaller than $\sqrt{I(X;Y)}$. It is an interesting open question to determine the correct exponent of $I(X;Y)$ in \Cref{eq:main_up_bd_intro}.

In order to prove Theorem~\ref{thm:main_lb_non_prod}, the function class $\calf$ will essentially consist of the set of all close-by pairs of parity functions and the distribution $\mu$ will correspond to the \emph{noisy Boolean hypercube}. We are then able to reduce the problem of computing $\calf$
under $\mu$ with contextual uncertainty, to the problem of computing a related function in the standard distributional communication complexity model (i.e., \emph{without} uncertainty) under a related distribution. We then use the \emph{discrepancy method} to prove a lower bound on the communication complexity of the new problem. This task itself reduces to upper bounding the spectral norm of a certain communication matrix. The choice of our underlying distribution then implies a tensor structure for this matrix, which reduces the spectral norm computation to bounding the largest singular value of an explicit family of $4 \times 4 $ matrices. For more details about the proof of Theorem~\ref{thm:main_lb_non_prod}, we refer the reader to Section~\ref{sec:lb_non_prod}.

\paragraph{Contrast with prior work.}

The first works to consider communication with uncertainty in a manner similar to this work were those
of \cite{JS1,GJS:jacm}. Their goal was to model an extreme form of uncertainty, where Alice and Bob do not
have any prior (known) commonality in context and indeed both come with their own ``protocol'' which
tells them how to communicate. So communication is needed even to resolve this uncertainty.
While their setting is thus very broad, the solutions they propose are much slower and typically involve
resolving the uncertainty as a first step.

The later works \cite{JKKS11,HaramatyS,CGMS15} tried to restrict the forms of uncertainty to see when it
could lead to more efficient communication solutions. For instance, Juba et al.~\cite{JKKS11} consider the
compression problem when Alice and Bob do not completely agree on the prior. This introduces some
uncertainty in the beliefs, and they provide fairly efficient solutions by restricting the uncertainty to a
manageable form. Canonne et al.~\cite{CGMS15} were the first to connect this stream of work to
communication complexity, which seems to be the right umbrella to study the broader communication
problems. The imperfectness they study is however restricted to the randomness shared by the communicating
parties, and does not incorporate any other elements. They suggest studying imperfect understanding of the
function being computed as a general direction, though they do not suggest specific definitions, which we in particular do in this work.

\paragraph{Organization} 
In Section~\ref{sec:model}, we carefully develop the uncertain communication complexity model after recalling the standard distributional communication complexity model. In Section~\ref{sec:LB}, we prove the hardness of contextual agreement. In Section~\ref{sec:ow_ub}, we prove our main upper bound (Theorem~\ref{thm:one-way}). In Section~\ref{sec:lb_non_prod}, we prove our main lower bound (Theorem~\ref{thm:main_lb_non_prod}). For a discussion of some intriguing future directions that arise from this work, we refer the reader to the conclusion section \ref{sec:conc}.

\section{The Uncertain Communication Complexity Model}\label{sec:model}

We start by recalling the classical communication complexity model of
Yao~\cite{Yao} and then present our definition and measures.

\subsection{Communication Complexity}\label{sec:comm_comp}

We start with some basic notation.
For an integer $n \in \mathbb{N}$, we denote by $[n]$ the set $\{1,\ldots, n\}$. We use $\log{x}$ to denote a logarithm in base $2$. For two sets $A$ and $B$, we denote
by $A \triangle B$ their symmetric difference.
For a distribution $\mu$, we denote by $x \sim \mu$ the
process of sampling a value $x$ from the distribution $\mu$. Similarly, for a set
${X}$ we denote by $x \sim {X}$ the process of sampling a
value $x$ from the uniform distribution over $X$.
For any event $E$, let $\1(E)$ be the $0$-$1$ indicator of
$E$.
For a probability distribution $\mu$ over $X \times Y$, we denote by $\mu_X$ the
marginal of $\mu$ over $X$. By $\mu_{Y \mid x}$, we denote the conditional
distribution of $\mu$ over $Y$ conditioned on $X = x$.

Given a distribution $\mu$ supported on $X$ and functions $f,g\colon X \to
\Sigma$, we let $\delta_\mu(f,g)$ denote the (weighted and normalized)
Hamming distance between $f$ and $g$, i.e., $\delta_\mu(f,g) \triangleq \Pr_{x \sim
\mu} [f(x) \ne g(x)]$. (Note that this definition extends naturally to
probabilitistic functions $f$ and $g$ -- by letting $f(x)$ and $g(x)$ be
sampled independently.)

We now turn to the definition of {\em communication complexity}.
A more thorough
introduction can be found in \cite{KN97}. Let $f\colon X\times Y\to \zo$ be a
function and Alice and Bob be two parties. A protocol $\Pi$ between Alice and
Bob specifies how and what Alice and Bob communicate given their respective
inputs and communication thus far. It also specifies when they stop and produce an
output (that we  require to be produced by Bob). A protocol is said to be
one-way if it involves a single message from Alice to Bob, followed by Bob
producing the output. The protocol $\Pi$ is said to compute $f$ if for every $(x,y)\in X \times Y$ it holds that
$\Pi(x,y) = f(x,y)$. The communication complexity of $\Pi$ is the number of bits
transmitted during the execution of the protocol between Alice and Bob. The communication
complexity of $f$ is the minimal communication complexity of a protocol computing $f$.

It is standard to relax the above setting by introducing a distribution $\mu$
over the input space $X\times Y$ and requiring the protocol to succeed with high
probability (rather than with probability 1). 
We say that a protocol $\Pi$ $\epsilon$-computes a function $f$ under
distribution $\mu$ if $\delta_{\mu}(\Pi(x,y),f(x,y)) \leq \epsilon$.

\begin{definition}[Distributional Communication Complexity]\label{def:dcc}
Let $f\colon X \times Y \rightarrow \zo$ be a Boolean function and $\mu$ be a
probability distribution over $X \times Y$. The \emph{distributional
communication complexity} of $f$ under $\mu$ with error $\epsilon$, denoted by
$\CC^{\mu}_\epsilon(f)$, is defined as the minimum over all protocols
$\Pi$ that $\epsilon$-compute $f$ over $\mu$, of the
communication complexity of $\Pi$.
The one-way communication complexity $\owCC^\mu_\epsilon(f)$ is defined
similarly by minimizing over one-way protocols $\Pi$.
\end{definition}

We note that it is also standard to provide Alice
and Bob with a shared random string which is independent of $x$, $y$ and $f$. In the distributional communication complexity model, it is a known fact that any protocol with shared randomness can be used to get a protocol
that does \emph{not} use shared randomness without increasing its distributed communication complexity.

In this paper, unless stated otherwise, whenever we refer to a protocol, we think of the input pair $(x,y)$ as coming from a distribution.

\subsection{Uncertain Communication Complexity}

We now turn to the central definition of this paper, namely \emph{uncertain communication complexity}.
Our goal is to understand how Alice and Bob can communicate when the
function that Bob wishes to determine is not known to Alice.
In this setting, we make the functions $g$ (that Bob wants to
compute) and $f$ (Alice's estimate of $g$) explicitly part of the input to
the protocol $\Pi$. Thus, in this setting a protocol $\Pi$ specifies how
Alice with input $(f,x)$ and Bob with input $(g,y)$ communicate, and how
they stop and produce an output.
We say that $\Pi$ computes $(f,g)$ if for every
$(x,y) \in X \times Y$, the protocol outputs $g(x,y)$.
We say that a (public-coin) protocol $\Pi$ $\epsilon$-computes $(f,g)$ over $\mu$ if
$\delta_\mu(g,\Pi) \leq \epsilon$.

Next, one
may be tempted to define the communication complexity of a pair of functions
$(f,g)$ as the minimum over all protocols that compute $(f,g)$ of their maximum communication. But this does not capture the uncertainty! (Rather, a protocol that
works for the pair corresponds to both Alice and Bob knowing both $f$ and $g$.)
To model uncertainty, we have to consider the communication complexity of
a whole class of pairs of functions, from which the pair $(f,g)$ is chosen (in our case by an adversary).

Let $\calf \subseteq \{f\colon X \times Y \to \zo\}^2$ be a family of pairs of Boolean
functions with domain $X\times Y$.
We say that a public-coin protocol $\Pi$ $\epsilon$-computes $\calf$ over $\mu$
if for every $(f,g) \in \calf$, we have that $\Pi$ $\epsilon$-computes
$(f,g)$ over $\mu$. We are now ready to present our main definition.


\begin{definition}[Contextually Uncertain Communication Complexity]\label{def:ucc_general}
  Let $\mu$ be a distribution on $X \times Y$ and
$\calf \subseteq \{f\colon X \times Y \to \zo\}^2$.
  The communication complexity of $\calf$ under contextual uncertainty,
  denoted
  $\CCU^\mu_{\epsilon}(\calf)$, is the minimum over all public-coin protocols $\Pi$
that $\epsilon$-compute $\calf$ over $\mu$, of the maximum communication
complexity of $\Pi$ over $(f,g) \in \calf$, $(x,y)$ from the support of
$\mu$ and settings of the public coins.

As usual, the one-way contextually uncertain communication complexity
$\owCCU^\mu_\epsilon(\calf)$ is defined similarly.
\end{definition}

We remark that while in the standard distributional model of Subsection~\ref{sec:comm_comp}, shared randomness can be assumed \emph{without loss of generality}, this is not necessarily the case in Definition~\ref{def:ucc_general}. This is because in principle, shared randomness can help fool the adversary who is selecting the pair $(f,g) \in \calf$. Also, observe that in the special case where $\calf = \{(f,g)\}$, \Cref{def:ucc_general} boils down
to the standard definition of distributional communication complexity (i.e., \Cref{def:dcc}) for the function $g$, and we thus have
$\CCU^\mu_\epsilon(\{(f,g)\}) = \CC^\mu_\epsilon(g)$.
Furthermore, the uncertain communication complexity is monotone, i.e.,
if $\calf \subseteq \calf'$ then
$\CCU^\mu_\epsilon(\calf)
\leq \CCU^\mu_\epsilon(\calf')$. Hence, we conclude that
$\CCU^\mu_\epsilon(\calf) \geq \max_{\{g \;|\; \exists f {\rm ~s.t.~} (f,g) \in \calf\}}
\{\CC^\mu_\epsilon(g)\}$.

In this work, we attempt to identify a setting under which the
above lower bound can be matched. If the set of functions $\Gamma(g) = \{f \;|\;
(f,g) \in \calf\}$ is not sufficiently informative about $g$, then it seems
hard to conceive of settings where Alice can do non-trivially well. We thus
pick a simple and natural restriction on $\Gamma(g)$, namely, that it contains
functions that are close to $g$ (in $\delta_\mu$-distance). This leads us to
our main target classes. For
parameters $k,\epsilon,\delta>0$, define the sets of pairs of functions
\begin{align*}
  \calf_{k,\epsilon,\delta} \triangleq \{(f,g) \;|\; \delta_\mu(f,g) \leq \delta
  ~\& ~ \CC^\mu_\epsilon(f), \CC^\mu_\epsilon(g) \leq k\}
\end{align*}
and 
\begin{align*}
  \owF_{k,\epsilon,\delta} \triangleq \{(f,g) \;|\; \delta_\mu(f,g) \leq \delta
  ~\& ~ \owCC^\mu_\epsilon(f), \owCC^\mu_\epsilon(g) \leq k\}.
\end{align*}

In words, $\calf_{k,\epsilon,\delta}$ (resp.\ $\owF_{k,\epsilon,\delta}$) considers all possible functions $g$
with communication complexity (resp.\ one-way communication complexity) at most $k$ with Alice being
roughly under all possible uncertainties within
distance $\delta$ of Bob.\footnote{For the sake of symmetry, we insist that
$\CC^\mu_\epsilon(f) \leq k$ (resp.\ $\owCC^{\mu}_{\epsilon}(f) \leq k$). We need not have insisted on it but since the other conditions anyhow imply that
$\CC^\mu_{\epsilon+\delta}(f) \leq k$ (resp.\ $\owCC^{\mu}_{\epsilon+\delta}(f) \leq k$), we decided to include this stronger
condition for aesthetic reasons.}

It is clear that $\owCC^\mu_\epsilon(\owF_{k,\epsilon,\delta}) \geq	k$.
Our first main result, Theorem~\ref{thm:one-way}, gives an upper bound on this quantity, which in the particular case of product distributions is comparable to $k$
(up to a constant factor increase in the error and communication complexity). In Theorem~\ref{thm:LB} we show that a naive strategy that
attempts to reduce the uncertain communication problem to a ``function agreement problem'' (where Alice and Bob agree on a function $q$ that is close to
$f$ and $g$ and then use a protocol for $q$) cannot work. Furthermore, our second main result, Theorem~\ref{thm:main_lb_non_prod}, shows that for general non-product distributions, $\CC^\mu_\epsilon(\owF_{k,\epsilon,\delta})$ can be much larger than $k$. More precisely, we construct a function class along with a distribution $\mu$ for which the one-way communication complexity in the standard model is a single bit whereas, under contextual uncertainty, the two-way communication complexity is at least $\Omega(\sqrt{n})$!


\newcommand{\AD}{\textsc{Agreement-Distillation}}
\renewcommand{\F}{\mathcal{F}}
\newcommand{\Ent}{\mathsf{Ent}} 

\section{Hardness of Contextual Agreement} \label{sec:LB}

In this section, we show that even if both $f$ and $g$ have small one-way
distributional  communication complexity on some distribution $\mu$, agreeing on a $q$ such that $\delta_{\mu}(q,f)$ is small takes
communication that is roughly the size of the bit representation of $f$ (which
is exponential in the size of the input). Thus, agreeing on $q$ before
simulating a protocol for $q$ is exponentially costlier than even the trivial
protocol where Alice sends her input $x$ to Bob.
Formally, we consider the following communication problem:

\begin{definition}[$\agree_{\delta,\gamma}(\calf)$]
For a family of pairs of functions $\calf \subseteq \{f\colon X \times Y \to \zo\}^2$, the
{$\calf$-agreement} problem with parameters $\delta,\gamma \geq 0$ is the communication problem where Alice gets $f$ and Bob gets
$g$ such that $(f,g) \in \calf$ and their goal is for Alice to output $q_A$ and
Bob to output $q_B$ such that $\delta(q_A,f),\delta(q_B,g) \leq \delta$ and $\Pr[q_A = q_B] \geq \gamma$.

\end{definition}

Somewhat abusing notation, we will use
$\agree_{\delta,\gamma}(\D)$ to denote
the distributional problem where $\D$ is a distribution on
$\{f\colon X \times Y \to \zo\}^2$ and the goal now is to get agreement with
probability $\delta$ over the randomness of the protocol and the input.

If the agreement problem could be solved with low communication for the family $\calf_{k,\delta,\epsilon}$
as defined at the end of Section~\ref{sec:model}, then this would turn into a natural protocol for
$\CCU(\calf_{k,\epsilon',\delta'})$ for some positive $\epsilon'$ and $\delta'$ as well. Our following theorem proves that agreement is a huge overkill.

\begin{theorem}\label{thm:LB}
For every $\delta,\delta_2>0$, there exists $\alpha>0$
and a family $\calf \subseteq \calf_{0,0,\delta}$ such that
for every $\gamma > 0$, it holds that
$\CC(\agree_{\delta_2,\gamma}(\calf)) \geq \alpha |Y| - \log(1/\gamma)$.
\end{theorem}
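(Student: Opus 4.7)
The plan is to apply Yao's minimax principle: exhibit a hard input distribution $\D$ supported on $\calf$ under which any deterministic $c$-bit protocol with agreement probability $\geq \gamma$ must satisfy $c \geq \alpha|Y| - \log(1/\gamma)$. First I would pin down the structure of $\calf_{0,0,\delta}$. A pair $(f,g)$ with $\CC^\mu_0(f) = \CC^\mu_0(g) = 0$ must be computable by Bob from $y$ alone (no communication, zero error), equivalently both $f$ and $g$ depend only on $y$. Taking $\mu$ to be uniform on $X \times Y$ and writing $n = |Y|$, such functions are identified with strings in $\{0,1\}^n$, and $\delta_\mu(f,g) \leq \delta$ becomes a Hamming distance constraint. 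I would then set $\calf$ to be the family of all pairs of strings in $\{0,1\}^n$ at Hamming distance at most $\delta n$, so that $\calf \subseteq \calf_{0,0,\delta}$.

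Next I would define the hard distribution $\D$: sample $F \in \{0,1\}^n$ uniformly, independently sample a noise vector $E$ uniform over the set of vectors of Hamming weight $\lfloor \delta n\rfloor$, and set $G = F \oplus E$. Then $(F, G) \in \calf$ almost surely, and a standard calculation gives $H(F \mid G) = \log\binom{n}{\lfloor \delta n\rfloor} = H_2(\delta)n - o(n)$, hence $I(F; G) = (1 - H_2(\delta))n + o(n)$, where $H_2$ denotes binary entropy.

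The heart of the argument is information-theoretic. Fix a protocol with transcript $T$ of length at most $c$ and outputs $(Q_A, Q_B)$; let $\cale = \{Q_A = Q_B\}$ with $\Pr[\cale] \geq \gamma$, and write $Q = Q_A = Q_B$ on $\cale$. Since $\delta(Q,F) \leq \delta_2$ on $\cale$, $F$ lies in $\Ball(Q, \delta_2 n)$, giving $H(F \mid Q, \cale) \leq H_2(\delta_2) n + o(n)$. Because $F$ is uniform, $\Pr[F = f \mid \cale] \leq 2^{-n}/\Pr[\cale] \leq 2^{-n}/\gamma$ for every $f$, so the min-entropy bound gives $H(F \mid \cale) \geq n - \log(1/\gamma)$; combining the two yields
\[
I(F; Q \mid \cale) \;\geq\; (1 - H_2(\delta_2))n - \log(1/\gamma) - o(n).
\]
On $\cale$, $Q$ is a function of $(G,T)$ (via Bob's side), so $I(F; Q \mid \cale) \leq I(F; G, T \mid \cale)$. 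The chain rule and $H(T) \leq c$ give $I(F; G, T) \leq I(F; G) + c = (1-H_2(\delta))n + c + o(n)$, and a transfer-to-$\cale$ step (treating the protocol as $\gamma$-success side-information compression of $F$ with side info $G$ at Bob) carries the same bound across at only an additive $O(\log(1/\gamma))$ cost. Matching the two estimates forces $c \geq (H_2(\delta) - H_2(\delta_2))n - O(\log(1/\gamma))$, so one can take $\alpha \triangleq H_2(\delta) - H_2(\delta_2) > 0$ in the natural regime $\delta_2 < \delta$.

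The main obstacle will be formally transferring the unconditional bound $I(F;G,T) \leq I(F;G) + c$ into its conditional-on-$\cale$ version while paying only an additive $\log(1/\gamma)$, rather than the naive multiplicative $1/\gamma$ that falls out of direct conditioning. I expect the cleanest route is either (i) a Wyner--Ziv-style rate-distortion argument with side information at Bob, for which $\gamma$-success versions classically lose only a logarithmic additive term, or (ii) a direct combinatorial ``covering'' argument: bound the total $\D$-mass of agreeing $(f,g)$-pairs by summing, over possible common outputs $q$, the product of volumes of $\Ball(q, \delta_2 n)$ on each side, and note that the number of feasible $q$'s is controlled by $2^c$ times the image sizes. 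A secondary subtlety is the regime $\delta_2 \geq \delta$, where $\alpha$ above degenerates; there one should instead instantiate $\calf$ via a high-minimum-distance code so that agreement forces unique decoding, decoupling the lower bound from the small-noise reconstruction.
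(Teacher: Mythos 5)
Your setup---restricting to functions that depend only on Bob's coordinate so that $\calf\subseteq\calf_{0,0,\delta}$, identifying them with strings in $\zo^{|Y|}$, placing a correlated ``uniform plus $\delta$-noise'' distribution on the pairs, and invoking Yao's principle to pass to deterministic protocols---is exactly the paper's setup. The divergence, and the gap, is in the core lower bound. The paper does \emph{not} run an information-theoretic argument: it observes that any agreement protocol's output $q_A$ must have min-entropy at least $(1-h(\delta_2))|Y|$ (since $q_A=q$ is only output when $f\in\Ball(q,\delta_2|Y|)$, a set of measure at most $2^{-(1-h(\delta_2))|Y|}$), reduces to the \AD{} problem, and invokes the hypercontractivity-based lower bound of \cite{CGMS15} (building on Bogdanov--Mossel). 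Your proposal replaces this with the claim that $I(F;G,T\mid\cale)\le I(F;G)+c+O(\log(1/\gamma))$, and this ``transfer-to-$\cale$'' step is not merely unproven---the general statement it rests on is false. Conditioning on an event of probability $\gamma$ can inflate the mutual information between $\rho$-correlated strings by far more than $\log(1/\gamma)$: for $\cale=\{F=G\}$ one has $\Pr[\cale]=(1-\rho)^n$, so $\log(1/\gamma)=\Theta(\rho n)$, while $I(F;G\mid\cale)-I(F;G)=h(\rho)n=\Theta(\rho\log(1/\rho)\,n)\gg\log(1/\gamma)$ for small $\rho$. A protocol can deliberately concentrate its success event on such atypically correlated inputs, and controlling this is precisely the content of the Bogdanov--Mossel theorem; your fallback covering argument runs into the same wall, since inside a rectangle $F$ and $G$ remain correlated and bounding $\Pr[F\in S,\,G\in S']$ for small $S,S'$ is again a hypercontractive small-set statement, not a product of volumes.

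Two smaller points. First, your $\alpha=H_2(\delta)-H_2(\delta_2)$ only covers $\delta_2<\delta$, whereas the theorem quantifies over all $\delta_2$ (implicitly $\delta_2<1/2$); your proposed fix via a high-minimum-distance code does not work, because a code of distance exceeding $2\delta_2|Y|$ intersected with the constraint $\delta(f,g)\le\delta$ leaves only pairs with $f=g$, for which agreement is free. The paper's min-entropy accounting handles all $\delta_2<1/2$ uniformly, yielding $\alpha=\epsilon(\rho)\cdot(1-h(\delta_2))$. Second, and minor: the paper's noise is i.i.d.\ with rate $\rho=\delta/2$ plus a Chernoff step to argue the sampled pair lies in $\calf$ with probability $1-\gamma$ (costing only a factor of $2$ in $\gamma$); your fixed-weight noise is a cosmetic variant of this. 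The essential missing ingredient is the reduction to \AD{} and the citation (or reproof) of Lemma~\ref{lem:cgms}.
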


In words, Theorem~\ref{thm:LB} says that there is a family of pairs of functions supported on functions
of {\em zero}  communication complexity (with zero error) for which agreement takes communication
polynomial in the size of the domain of the functions. Note that this is
exponentially larger than the trivial communication complexity for any
function $g$, which is at most $\min\{1+ \log|Y|, \log |X|\}$.

We stress that while an agreement lower bound for zero communication functions may feel a lower bound for a toy problem,
a lower bound for this setting is inherent in any separation between agreement complexity for $\calf$
and communication complexity with uncertainty for $\calf$. To see this, note that given any input to
the $\CCU(\calf)$ problem, Alice and Bob can execute any protocol for $\CCU(\calf)$ pinning down the value of
the function to be computed with high probability and low communication. If one considers the remaining
challenge to agreement, it comes from a zero communication problem.

Our proof of Theorem \ref{thm:LB} uses a lower bound on
the communication complexity of \emph{agreement distillation (with imperfectly shared randomness)} problem
defined in \cite{CGMS15}, who in turn rely on a lower bound for randomness
extraction from correlated sources due to 
Bogdanov and Mossel~\cite{BogdanovMossel}.

We describe their problem below and the result that we use. We note that their
context is slightly different and our description below is a reformulation.
First, we define the notion of $\rho$-perturbed sequences of bits.
A pair of bits $(a,b)$ is said to be a pair of $\rho$-perturbed uniform bits if $a$ is
uniform over $\{0,1\}$, and $b = a$ with probability $1-\rho$ and $b \ne a$
with probability $\rho$. A pair of sequences of bits $(r,s)$ is said to be
$\rho$-perturbed if $r = (r_1,\ldots,r_n)$ and $s = (s_1,\ldots,s_n)$ and
each coordinate pair $(r_i,s_i)$ is a $\rho$-perturbed uniform pair drawn
independently of all other pairs. For a random variable $W$, we define
its min-entropy as $H_\infty(w)
\triangleq \min_{w \in \mathsf{supp}(W)} \{- \log(\Pr[W = w]\}$.

\begin{definition}[$\AD^k_{\gamma,\rho}$]
In this problem, Alice and Bob get as inputs $r$ and $s$, where $(r,s)$ form a
$\rho$-perturbed sequence of bits.
Their goal is to communicate deterministically
and produce as outputs $w_A$ (Alice's output) and $w_B$ (Bob's output) with the
following
properties:
(i) $H_\infty(w_A), H_\infty(w_B) \geq k$ and (ii) $\Pr_{(r,s)} [w_A = w_B] \geq
\gamma$.
\end{definition}

\begin{lemma}[{\protect \cite[Theorem 2]{CGMS15}}]
\label{lem:cgms}
For every $\rho>0$, there exists $\epsilon > 0$ such that for every $k$ and
$\gamma$, it holds that every deterministic protocol $\Pi$ that computes
$\agree^k_{\gamma,\rho}$ has communication complexity at least $\epsilon k -
\log 1/\gamma$.
\end{lemma}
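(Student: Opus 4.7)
The plan is to reduce a $c$-bit deterministic protocol for $\AD^k_{\gamma,\rho}$ to a zero-communication agreement problem on the same $\rho$-perturbed source, and then invoke Bogdanov-Mossel's lower bound on randomness extraction from correlated sources as a black box. Fix such a protocol $\Pi$; let $T = T(R, S)$ denote the transcript random variable (taking at most $2^c$ values), and for each transcript $\tau$ let $A_\tau \times B_\tau$ be the combinatorial rectangle of inputs consistent with $\tau$, with deterministic output functions $W_A^\tau : A_\tau \to \mathcal{W}$ and $W_B^\tau : B_\tau \to \mathcal{W}$. The hypotheses on $\Pi$ translate to $H_\infty(W_A), H_\infty(W_B) \geq k$ and $\Pr[W_A = W_B] \geq \gamma$.

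The main step is to identify a ``good'' transcript $\tau^*$ by averaging. First discard the ``light'' transcripts with $\Pr[T = \tau] < \gamma/2^{c+1}$: their total contribution to $\Pr[W_A = W_B]$ is at most $\gamma/2$, so the ``heavy'' transcripts still carry agreement probability at least $\gamma/2$, and among them at least one $\tau^*$ must have conditional agreement $\Pr[W_A = W_B \mid T = \tau^*] \geq \gamma/2$. Conditioned on $T = \tau^*$, Alice's output $W_A^{\tau^*}(R)$ is a function of $R$ alone and Bob's output $W_B^{\tau^*}(S)$ a function of $S$ alone, so we have reduced to a zero-communication agreement problem. Moreover, the unconditional bound $\Pr[W_A = w] \leq 2^{-k}$ combined with $\Pr[T = \tau^*] \geq \gamma/2^{c+1}$ gives $\Pr[W_A = w \mid T = \tau^*] \leq 2^{-k+c+1}/\gamma$, and similarly for $W_B$; hence the conditional min-entropies satisfy $H_\infty(W_A \mid T = \tau^*), H_\infty(W_B \mid T = \tau^*) \geq k - c - O(\log(1/\gamma))$.

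The hardest part is to invoke Bogdanov-Mossel here, because the conditional distribution of $(R, S)$ given $T = \tau^*$ is supported on the rectangle $A_{\tau^*} \times B_{\tau^*}$ and is no longer literally $\rho$-perturbed. The standard workaround leverages the fact that BM's conclusion follows from a hypercontractive two-set small-set-expansion inequality of the form $\Pr[R \in A, S \in B] \leq (|A|/2^n)^\alpha (|B|/2^n)^\alpha$ with some $\alpha = \alpha(\rho) > 1/2$, which holds on the full $\rho$-perturbed measure. One can either extend $W_A^{\tau^*}, W_B^{\tau^*}$ to all of $\{0,1\}^n$ using distinct ``garbage'' symbols outside the rectangle (so that only in-rectangle collisions contribute to agreement) and apply BM directly, or alternatively work term-by-term on the identity $\Pr[W_A = W_B] = \sum_{\tau, w} \Pr[R \in A_{\tau, w}, S \in B_{\tau, w}]$, combining the small-set-expansion bound with the marginal min-entropy constraints $\sum_\tau \Pr[R \in A_{\tau, w}, S \in B_\tau] \leq 2^{-k}$ via H\"older's inequality. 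Either route produces $\gamma \cdot 2^{-c} \leq 2^{-\epsilon_0(k - c - O(\log(1/\gamma)))}$ for some $\epsilon_0 = \epsilon_0(\rho) > 0$, and rearranging yields $c \geq \epsilon k - \log(1/\gamma)$ for $\epsilon = \epsilon_0/(1 + \epsilon_0)$ (after absorbing additive constants), as claimed.
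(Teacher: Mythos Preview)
The paper does not prove this lemma; it is quoted as \cite[Theorem~2]{CGMS15} and used as a black box. The only hint the paper gives about the argument is the remark that CGMS15 ``rely on a lower bound for randomness extraction from correlated sources due to Bogdanov and Mossel.'' So there is no in-paper proof to compare your proposal against.

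That said, your sketch follows exactly the route the paper's remark points to: condition on a heavy transcript to strip away the communication, then invoke the zero-communication Bogdanov--Mossel bound. The averaging step and the conditional min-entropy calculation $H_\infty(W_A\mid T=\tau^*)\ge k-c-1-\log(1/\gamma)$ are correct. The delicate point is the one you already flag: conditioning on $T=\tau^*$ destroys the $\rho$-perturbed structure. Your first workaround (extend by garbage symbols and apply BM over the unconditioned measure) needs more care than you indicate, because the min-entropy bound you derived is with respect to the \emph{conditional} marginal of $R$ given the rectangle, whereas BM requires min-entropy under the \emph{uniform} marginal of $R$; the two are not the same, and passing from one to the other is not free. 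Your second workaround --- expand $\Pr[W_A=W_B]=\sum_{\tau,w}\Pr[R\in A_{\tau,w},\,S\in B_{\tau,w}]$, bound each term by the two-set hypercontractive inequality, and combine via H\"older under the min-entropy and partition constraints --- is the cleaner path and avoids this issue. As written your proposal is a correct plan consistent with how the cited result is proved, rather than a complete proof.
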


We note that while the agreement distillation problem is very similar to our agreement problem, there
are some syntactic differences. We are considering pairs of functions with low
communication complexity, whereas the agreement-distillation problem considers arbitrary
random  sequences. Also, our output criterion is proximity to the input
functions, whereas in the agreement-distillation problem, we need to produce
high-entropy outputs. Finally, we want a lower bound for our agreement
problem when Alice and Bob are allowed to share perfect randomness while the agreement-distillation bound only holds for deterministic protocols.
Nevertheless, we are able to reduce to their setting quite easily as we will
see shortly.

Our proof of Theorem~\ref{thm:LB} uses the standard Chernoff-Hoeffding tail
inequality on random variables that we include below. Denote $\exp(x)\triangleq
e^x$, where $e$ is the base of the natural logarithm.
\begin{proposition}[Chernoff bound]\label{prop:chernoff}
  Let $X=\sum_{i=1}^nX_i$ be a sum of identically distributed independent random
  variables $X_1,\dots,X_n\in \{0,1\}$. Let $\mu = \Ex[X] = \sum_{i=1}^n\Ex[X_i]$. It
  holds that for $\delta \in (0,1)$,
  \begin{align*}
    \Pr[X < (1 - \delta)\mu] \leq \exp\left(-\delta^2\mu/2\right)
  \end{align*}
  and
  \begin{align*}
    \Pr[X > (1 + \delta)\mu] \leq \exp\left(-\delta^2\mu/3\right),
  \end{align*}
  and for $a >0$,
  \begin{align*}
    \Pr[X > \mu + a] \leq \exp(-2a^2/n)
  \end{align*}
\end{proposition}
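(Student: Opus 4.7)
The plan is to prove all three tail bounds via the standard Chernoff/Hoeffding exponential moment method: apply Markov's inequality to $e^{tX}$ for a judiciously chosen parameter $t$, exploit independence to factor the moment generating function, bound each factor using convexity, and finally optimize over $t$. Let $p_i = \Ex[X_i]$, so $\mu = \sum_i p_i$.

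For the upper tail $\Pr[X > (1+\delta)\mu]$, I would fix $t > 0$ and write
\begin{equation*}
\Pr[X > (1+\delta)\mu] = \Pr\bigl[e^{tX} > e^{t(1+\delta)\mu}\bigr] \leq e^{-t(1+\delta)\mu}\,\Ex[e^{tX}] = e^{-t(1+\delta)\mu}\prod_{i=1}^n \Ex[e^{tX_i}],
\end{equation*}
where the last equality is by independence. Since $X_i \in \{0,1\}$, $\Ex[e^{tX_i}] = 1 + p_i(e^t - 1) \leq \exp(p_i(e^t-1))$ using $1+x \leq e^x$. Multiplying over $i$ gives $\Ex[e^{tX}] \leq \exp(\mu(e^t-1))$, so $\Pr[X > (1+\delta)\mu] \leq \exp\bigl(\mu(e^t - 1) - t(1+\delta)\mu\bigr)$. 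Setting $t = \ln(1+\delta)$ (the optimizer) yields the classical bound $\bigl(e^\delta/(1+\delta)^{1+\delta}\bigr)^\mu$, and to convert this into the stated form I would invoke the elementary inequality $(1+\delta)\ln(1+\delta) - \delta \geq \delta^2/3$ valid for $\delta \in (0,1)$, which can be verified by comparing derivatives (or Taylor-expanding $(1+\delta)\ln(1+\delta)$) at zero.

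For the lower tail $\Pr[X < (1-\delta)\mu]$, I would run the same argument with $t < 0$: write $\Pr[X < (1-\delta)\mu] = \Pr[e^{tX} > e^{t(1-\delta)\mu}]$, apply Markov, factorize, bound each factor by $\exp(p_i(e^t-1))$, and optimize with $t = \ln(1-\delta) < 0$. This produces the bound $\bigl(e^{-\delta}/(1-\delta)^{1-\delta}\bigr)^\mu$, and the stronger inequality $(1-\delta)\ln(1-\delta) + \delta \geq \delta^2/2$ for $\delta \in (0,1)$ (again verified by Taylor expansion, which gives the better constant $1/2$ here) finishes the bound.

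For the additive Hoeffding form $\Pr[X > \mu + a] \leq \exp(-2a^2/n)$, I would use the same Markov-on-MGF template but bound the centered increments via Hoeffding's lemma: for any random variable $Z \in [0,1]$ with mean $p$, $\Ex[e^{t(Z-p)}] \leq e^{t^2/8}$ (proved by bounding the log-MGF's second derivative by $1/4$ on $[0,1]$ and integrating twice from $0$). Then
\begin{equation*}
\Pr[X > \mu + a] \leq e^{-ta}\,\Ex\bigl[e^{t(X-\mu)}\bigr] \leq e^{-ta}\,e^{nt^2/8},
\end{equation*}
and choosing $t = 4a/n$ gives $\exp(-2a^2/n)$. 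The main routine obstacle is keeping track of the two slightly different numerical constants ($\delta^2/2$ versus $\delta^2/3$) in the multiplicative bounds; these are standard and reduce to the two scalar inequalities highlighted above, which I would verify by examining the relevant one-variable functions on $(0,1)$.
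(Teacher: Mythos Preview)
The paper does not include a proof of this proposition; it is stated as a standard tail inequality and then invoked in the proof of Theorem~3.2. Your proposal is the standard exponential-moment (Chernoff/Hoeffding) argument and is correct, so there is nothing to compare against.
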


\begin{proof}[Proof of Theorem~\ref{thm:LB}]
We prove the theorem for $\alpha < \delta/6$, in which case we may assume
$\gamma > \exp(-\delta |Y| /6)$ since otherwise the right-hand side is
non-positive.

Let $\F_B$ denote the set of functions that depend only on Bob's inputs, i.e., $f\in \F_B$ if there
exists $f'\colon Y \to \zo$ such that $f(x,y) = f'(y)$ for all $x,y$.
Our family $\calf$ will be a subset of $\F_B \times \F_B$, the subset that contains functions that are at most $\delta |Y|$ apart.
$$\calf \triangleq \{(f,g) \in \F_B \times \F_B ~|~ \delta(f,g)  \leq \delta\}.$$
It is clear that communication complexity of every function in the support of $\calf$ is zero, with zero error (Bob can compute it on his
own) and so $\calf \subseteq \calf_{0,0,\delta}$.
So it remains to prove a lower bound on $\CC(\agree_{\delta_2,\gamma}(\calf))$.

We prove our lower bound by picking a
distribution $\D_\rho$
supported mostly on $\calf$ and by giving a lower bound
on $\CC(\agree_{\delta_2,\gamma}(\D_\rho))$.
Let $\rho = \delta/2$.
The distribution $\D_\rho$ is a simple one. It samples $(f,g)$ as
follows. The function $f$ is drawn uniformly at random from
$\calf_B$. Then, $g$ is chosen to be a ``$\rho$-perturbation'' of $f$,
namely for every $y \in Y$, $g'(y)$ is chosen to be equal to $f(x,y)$ with probability $1-\rho$ and $1-f(x,y)$ with probability
$\rho$. For every $x \in X$, we now set $g(x,y) = g'(x,y)$.

By the Chernoff bound (see Proposition~\ref{prop:chernoff}), we have
that
$\Pr_{(f,g) \sim \D_\rho} [\delta(f,g) > \delta] = \exp(-\rho|Y|/3) \leq
\gamma$.
So with overwhelmingly high probability, $\D_\rho$ draws elements from
$\calf$.
In particular, if some protocol solves $\agree_{\delta_2,\gamma}(\calf)$,
then it would also solve $\agree_{\delta_2,2\gamma}(\D_\rho)$.

We thus need to show a lower bound on the communication complexity
of $\agree_{\delta_2,2\gamma}(\D_\rho)$. We now note that since this is a
distributional problem, by Yao's min-max principle, if there is randomized
protocol to solve
$\agree_{\delta_2,2\gamma}(\D_\rho)$ with $C$ bits of communication, then there
is also a deterministic protocol for the same problem and with the same complexity.
Thus, it suffices to lower bound the deterministic communication complexity of $\agree_{\delta_2,2\gamma}(\D_\rho)$.
Claim~\ref{clm:reduction} shows that any such protocol gives a deterministic protocol
for $\AD$ with $k = \Omega_{\delta_2}(|Y|)$. Combining this with
Lemma~\ref{lem:cgms} gives us the desired lower bound on
$\CC(\agree_{\delta_2,2\gamma}(\D_\rho))$ and hence on
$\CC(\agree_{\delta_2,\gamma}(\calf))$.
\end{proof}

\begin{claim}\label{clm:reduction}
Every protocol for $\agree_{\delta_2,\gamma}(\D_\rho)$ is also a protocol
for $\AD^k_{\gamma,\rho}$ for $k = (1 - h(\delta_2))|Y|$, where
$h(\cdot)$ is the binary entropy function given by $h(x) = - x \log x -
(1-x) \log (1-x)$.
\end{claim}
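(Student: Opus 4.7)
The plan is to give a direct reduction: any deterministic protocol $\Pi$ for $\agree_{\delta_2,\gamma}(\D_\rho)$ with communication $C$ yields a deterministic protocol $\Pi'$ for $\AD^k_{\gamma,\rho}$ with the same communication and $k = (1-h(\delta_2))|Y|$. First I set up the natural bijection between $\{0,1\}^{|Y|}$ and $\F_B$, sending a string $r$ to the function $f_r \in \F_B$ defined by $f_r(x,y) \triangleq r_y$. Under this identification, a $\rho$-perturbed pair $(r,s)$ (uniform $r$, each $s_y$ an independent $\rho$-flip of $r_y$) pushes forward exactly to $\D_\rho$ on $\F_B \times \F_B$, since $\D_\rho$ is defined by precisely the same uniform sampling and independent coordinatewise $\rho$-perturbation.

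Given $\Pi$, the protocol $\Pi'$ on input $(r,s)$ has Alice run her side of $\Pi$ on $f_r$, Bob run his side on $f_s$, and each player report the output of $\Pi$ as his/her $\AD$ output: $w_A \triangleq q_A$ and $w_B \triangleq q_B$. The communication cost is unchanged. Since the induced distribution of $(f_r, f_s)$ is exactly $\D_\rho$, the agreement guarantee transfers verbatim: $\Pr[w_A = w_B] = \Pr[q_A = q_B] \geq \gamma$.

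To finish, I verify $H_\infty(w_A), H_\infty(w_B) \geq k$. For any fixed output value $w$, the event $\{q_A = w\}$ implies $\delta(w,f_r) \leq \delta_2$, simply by the defining requirement of the agreement protocol. Interpreting $w$ as a string in $\{0,1\}^{|Y|}$, the Hamming ball of radius $\delta_2 |Y|$ around $w$ has size at most $\sum_{i \leq \delta_2|Y|}\binom{|Y|}{i} \leq 2^{h(\delta_2)|Y|}$ by the standard volume bound, so
\[
\Pr_r[q_A = w] \;\leq\; \Pr_r[\delta(w,f_r) \leq \delta_2] \;\leq\; \frac{2^{h(\delta_2)|Y|}}{2^{|Y|}} \;=\; 2^{-(1-h(\delta_2))|Y|}.
\]
Taking the maximum over $w$ gives $H_\infty(w_A) \geq (1-h(\delta_2))|Y| = k$, and the same argument (with $g$ in place of $f$, using that the marginal of $g$ under $\D_\rho$ is also uniform over $\F_B$) handles $w_B$.

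The main subtlety I expect is that $\Pi$'s outputs $q_A, q_B$ are a priori arbitrary functions, not guaranteed to lie in $\F_B$. This should be resolvable by one of two routine moves: either observe that restricting $\Pi$'s outputs to $\F_B$ costs nothing, since the targets $f, g$ lie in $\F_B$ and projecting any output to its nearest element of $\F_B$ does not worsen the distance in a controllable way; or, cleaner still, verify that $\Pr_r[\delta(w,f_r) \leq \delta_2]$ is maximized over $w$ precisely when $w \in \F_B$, because $f_r$ is constant along the $x$-coordinate and any $x$-variation in $w$ contributes an irreducible baseline distance to every $f_r$, tightening the tail. Either path is a short check that I would carry out to conclude the min-entropy bound uniformly in $w$.
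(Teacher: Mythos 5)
Your proof is correct and follows essentially the same route as the paper's: interpret the $\rho$-perturbed strings as the functions sampled by $\D_\rho$, run the agreement protocol, and bound the min-entropy of each output by the $2^{h(\delta_2)|Y|}$ volume bound on Hamming balls against the uniform marginal. Your closing remark about outputs $q_A$ possibly lying outside $\F_B$ is a point the paper silently elides, and your observation that the ball-volume bound only improves for such $w$ correctly disposes of it.
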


\begin{proof}
Suppose Alice and Bob are trying to solve $\AD^k_{\gamma,\rho}$. They can
sample $\rho$-pertubed strings $(r,s) \in \{0,1\}^{|Y|}$ and interpret them
as functions $f',g' \colon Y \to \zo$ or equivalently as functions
$(f,g) \sim \D_\rho$. They can now simulate the protocol for
$\agree_{\delta_,\gamma}(f,g)$ and output $q_A$ and $q_B$. By definition of
$\agree$, we have $q_A = q_B$ with probability at least $\gamma$. So it suffices to show that $H_\infty(q_A), H_\infty(q_B) \geq k$.
But this is obvious since any function $q_A$ is output only if $\delta(f,q_A)
\leq \delta_2$ and we have that $|\{f \;|\; \delta(f,q_A) \leq \delta\}| \leq
2^{h(\delta_2) |Y|}$. Since the probability of sampling $f$ for any $f$
is at most
$2^{-|Y|}$, we have that the probability of outputting $q_A$ for any $q_A$
is at most $2^{-(1-h(\delta_2))|Y|}$. In other words,
$H_\infty(q_A) \geq (1-h(\delta_2)) |Y|$. Similarly, we can lower bound
$H_\infty(q_B)$ and thus we have that the outputs of the protocol for $\agree$
solve $\AD$ with $k = (1-h(\delta_2))|Y|$.
\end{proof}

\section{One-way Communication with Contextual Uncertainty}\label{sec:ow_ub}


In this section, we prove Theorem~\ref{thm:one-way}. We start with a high-level description of the protocol.

\subsection{Overview of Protocol}\label{subsec:overv_prot}
Let $\mu$ be a distribution over an input space $X\times Y$. For any function $s\colon X \times Y \rightarrow \zo$
and any $x \in X$, we define the {\em restriction} of $s$ to $x$ to be the function $s_x\colon Y \rightarrow \zo$ given by $s_x(y) = s(x,y)$ for any $y \in Y$. 

We now give a high-level overview of the protocol. First, we consider the particular case of Theorem~\ref{thm:one-way} where $\mu$ is a product distribution, i.e., $\mu = \mu_X \times \mu_Y$. Note that in this case, $I(X;Y) = 0$ in the right-hand side of \Cref{eq:main_up_bd_intro}. We will handle the case of general (not necessarily product) distributions later on.

The general idea is that given inputs $(f,x)$, Alice can determine the restriction $f_x$, and she will
try to describe it to Bob. For most values of $x$, $f_x$ will be close (in $\delta_{\mu_Y}$-distance)
to the function $g_x$. Bob will try to use the (yet unspecified) description given by 
Alice in order to determine some function $B$ that is close to $g_x$. If he succeeds in doing so, he can output 
$B(y)$ which would equal $g_x(y)$ with high probability over $y$.

We next explain how Alice will describe $f_x$, and how Bob will determine some function
$B$ that is close to $g_x$ based on Alice's description. For the first part, we let Alice and Bob use shared randomness in order to sample
$y_1,\ldots,y_m$, where the $y_i$'s are drawn independently with $y_i \sim \mu_Y$, and $m$ is a parameter
to be chosen later.
Alice's description of $f_x$ will then be $(f_x(y_1),\ldots,f_x(y_m)) \in \{0,1\}^m$.
Thus, the length of the communication is $m$ bits and we need to show that
setting $m$ to be roughly $O(k)$ suffices.
Before we explain this, we first need to specify what Bob does with Alice's message.

As a first cut, let us consider the following natural strategy: Bob picks an $\tilde{x} \in X$ such that $g_{\tilde{x}}$ is close to $f_x$ on $y_1,\ldots,y_m$, and sets $B = g_{\tilde{x}}$.  It is clear that if $\tilde{x} = x$, then $B = g_{\tilde{x}} = g_x$, and for every $y \in \mu_Y$, we would have $B(y) = g_x(y)$. Moreover, if $\tilde{x}$ is such that $g_{\tilde{x}}$ is close to $g_x$ (which is itself close to $f_x$), then $B(y)$ would now equal $g_x(y)$ with high probability.  It remains to deal with $\tilde{x}$ such that $g_{\tilde{x}}$ is far from $g_x$. Note that if we first fix any such $\tilde{x}$ and then sample $y_1,\ldots,y_m$, then with high probability, we would reveal that $g_{\tilde{x}}$ is far from $g_x$. This is because $g_x$ is close to $f_x$, so $g_{\tilde{x}}$ should also be far from $f_x$. However, this idea alone cannot deal with all possible $\tilde{x}$ --- using a naive union bound over all possible $\tilde{x} \in X$ would require a failure probability of $1/|X|$, which would itself require setting $m$ to be roughly $\log |X|$. Indeed, smaller values of $m$ should not suffice since we have not yet used the fact that $\CC^\mu_\epsilon(g) \leq k$ --- but we do so next.

Suppose that $\Pi$ is a one-way protocol with $k$ bits of communication. Then, note that Alice's message partitions $X$ into $2^k$ sets, one corresponding to each message.  Our modified strategy for Bob is to let him pick a representative $x$ from each set in this partition, and then set $B = g_{\tilde{x}}$ for an $\tilde{x}$ among the representatives for which $g_{\tilde{x}}$ and $f$ are the closest on the samples $y_1,\ldots,y_m$.  A simple analysis shows that the $g_x$'s that lie inside the same set in this partition are close, and thus, if we pick $\tilde{x}$ to be the representative of the set containing $x$, then $g_{\tilde{x}}$ and $f_x$ will be close on the sampled points.  For an other representative, once again if $g_{\tilde{x}}$ is close to $g_x$, then $g_{\tilde{x}}(y)$ will equal $g_x(y)$ with high probability. For a representative $\tilde{x}$ such that $g_{\tilde{x}}$ is far from $g_x$ (which is itself close to $f_x$), we can proceed as in the previous paragraph, and now the union bound works out since the total number of representatives is only $2^k$.\footnote{We note that a similar idea was used in a somewhat different context by \cite{YossefJKS02} (following on \cite{KremerNR99}) in order to characterize one-way communication complexity of any function under product distributions in terms of its VC-dimension.}

We now turn to the case of general (not necessarily product) distributions. In this case, we would like to run the above protocol with $y_1, y_2, \dots, y_m$ sampled independently from $\mu_{Y | x}$ (instead of $\mu_Y$). Note that Alice knows $x$ and hence knows the distribution $\mu_{Y | x}$. Unfortunately, Bob does not know $\mu_{Y | x}$; he only knows $\mu_Y$ as a ``proxy'' for $\mu_{Y | x}$. While Alice and Bob cannot jointly sample such $y_i$'s without communicating (as in the product case), they can still run the correlated sampling protocol of \cite{braverman2011information} in order to agree on such samples while communicating at most $O(m \cdot I(X;Y))$ bits. The original correlated sampling procedure of \cite{braverman2011information} inherently used multiple rounds of communication, but we are able in our case to turn it into a one-way protocol by leveraging the fact that our setup is distributional (see Subsection~\ref{subsec:corr_samp} for more details).

The outline of the rest of this section is the following. In Subsection~\ref{subsec:corr_samp}, we describe the properties of the correlated sampling procedure that we will use. In Subsection~\ref{subsec:proof_th_ub}, we give the formal proof of Theorem~\ref{thm:one-way}.

\subsection{Correlated Sampling}\label{subsec:corr_samp}

We start by recalling two standard notions from information theory. Given two disributions $P$ and $Q$, the \emph{KL divergence} between $P$ and $Q$ is defined as $D(P || Q) \triangleq \Ex_{u \sim P}[\log(P(u)/Q(u))]$. Given a joint distribution $\mu$ of a pair $(X,Y)$ of random variables with $\mu_X$ and $\mu_Y$ being the marginals of $\mu$ over $X$ and $Y$ respectively, the \emph{mutual information} of $X$ and $Y$ is defined as $I(X;Y) \triangleq D( \mu || \mu_X \mu_Y)$.

The following lemma summarizes the properties of the \emph{correlated sampling} protocol of \cite{braverman2011information}.
\begin{lemma}[\cite{braverman2011information}]\label{lem:BR_prot}
Let Alice be given a distribution $P$ and Bob be given a distribution $Q$ over a common universe $\mathcal{U}$. There is an interactive public-coin protocol that uses an expected
$$ D(P || Q) + 2\log(1/\epsilon) + O(\sqrt{D(P || Q)}+1) $$
bits of communication such that at the end of the protocol:
\begin{itemize}
\item Alice outputs an element $a$ distributed according to $P$.
\item Bob outputs an element $b$ such that for each $u \in \mathcal{U}$, $\Pr[b=u ~ | ~ a=u] > 1-\epsilon$.
\end{itemize}
Moreover, the message that Bob sends to Alice in any given round consists of a single bit indicating if the protocol should terminate or if Alice should send the next message.
\end{lemma}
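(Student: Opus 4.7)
The plan is to reproduce the correlated-sampling protocol of Braverman--Rao, adapting it so that all of Bob's messages are single-bit termination signals. Using the public randomness, Alice and Bob jointly generate an infinite sequence of i.i.d.\ pairs $(u_1,\alpha_1),(u_2,\alpha_2),\dots$, where each $u_i$ is drawn from some common reference distribution $R$ on $\mathcal{U}$ (one may take $R$ uniform if $\mathcal{U}$ is finite, or more generally a measure dominating both $P$ and $Q$) and each $\alpha_i$ is independently uniform on $[0,1]$. They also share a sequence of pairwise-independent hash functions $h_1,h_2,\dots\colon \mathbb{N}\to\{0,1\}$. Alice does rejection sampling: she lets $i_A$ be the smallest index with $\alpha_{i_A}\le P(u_{i_A})/(c\,R(u_{i_A}))$ for an appropriate scaling $c$, and outputs $a=u_{i_A}$. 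A standard computation shows $a\sim P$ and $\mathbb{E}[\log i_A]=D(P\|R)+O(1)$; by choosing $R$ to mimic $Q$ (or by normalizing so that the threshold involves the likelihood ratio $P/Q$ directly), one obtains $\mathbb{E}[\log i_A]=D(P\|Q)+O(1)$.

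To let Bob recover $i_A$, Alice transmits the hash bits $h_1(i_A),h_2(i_A),\dots$ one at a time. Bob maintains a candidate set
\[
  B_t=\bigl\{\,i\le 2^{T}\,:\,\alpha_i\le Q(u_i)/(c\,R(u_i)),\ h_s(i)\text{ agrees with Alice's $s$th bit for all }s\le t\,\bigr\},
\]
where $T=D(P\|Q)+2\log(1/\epsilon)+O(\sqrt{D(P\|Q)}+1)$ is the target budget. After each bit he sends back a single bit telling Alice whether to continue or to stop, stopping as soon as $B_t$ has narrowed to a unique element $i^\star$; Bob then outputs $b=u_{i^\star}$. Note Alice's messages are the hash bits and Bob's messages are single-bit controls, exactly matching the structural requirement of the lemma.

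The analysis splits the failure event into two pieces. First, the ``tail'' event that $i_A>2^{T}$: since $\log i_A$ concentrates around $D(P\|Q)$, a Bernstein-type inequality applied to the log-likelihood ratio gives $\Pr[i_A>2^T]\le \epsilon/2$ once the $2\log(1/\epsilon)$ slack and the $\sqrt{D(P\|Q)}$ correction are included; this is also what yields the additive $O(\sqrt{D(P\|Q)}+1)$ term in the communication bound. Second, the ``collision'' event that some index $j\ne i_A$ in $B_t$ agrees with $i_A$ on all $T$ hash bits: by pairwise independence of $h_1,\dots,h_T$ and a union bound over at most $2^{T}$ candidates, this event has probability $\le 2^{T}\cdot 2^{-T}\cdot \epsilon/2\le \epsilon/2$ when the threshold is set with a little extra slack. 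Finally, conditioned on the threshold $\alpha_{i_A}\le Q(u_{i_A})/(c\,R(u_{i_A}))$ (which can be built into the definition of $R$ so that it holds with probability $1-O(\epsilon)$), Alice's index lies in $B_t$ and so is the unique survivor, giving $b=a$ as required. The expected number of transmitted bits is dominated by $\mathbb{E}[\log i_A]$ plus the $2\log(1/\epsilon)+O(\sqrt{D}+1)$ safety margin, matching the stated bound.

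The main obstacle I anticipate is tightening the tail concentration of $\log(P(u)/Q(u))$ under $u\sim P$ to obtain the $O(\sqrt{D(P\|Q)})$ correction rather than a linear-in-$D$ loss; a naive Chernoff or Markov argument is too weak and one needs a Bernstein-style bound using the variance of the log-ratio. A secondary delicate point is the interaction between the random threshold and the hashing step: the candidate set $B_t$ must be defined so that $i_A\in B_t$ is almost guaranteed while $|B_t|$ remains small enough that $T$ hash bits suffice for uniqueness, and this requires choosing $R$ and $c$ so that Alice's and Bob's acceptance tests are ``compatible'' up to a factor $1-O(\epsilon)$.
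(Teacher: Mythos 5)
First, note that the paper does not prove this lemma at all: it is imported verbatim from Braverman--Rao \cite{braverman2011information}, so any proof has to reconstruct their argument. Your reconstruction has the right skeleton (a shared infinite sequence of proposal samples, rejection sampling on Alice's side, hash bits from Alice, one-bit continue/stop feedback from Bob), but two of its load-bearing steps would fail. \textbf{(a)} The $D(P\|Q)$ term cannot come from $\mathbb{E}[\log i_A]$. The reference measure $R$ is produced by public randomness and therefore cannot ``mimic $Q$'': $Q$ is Bob's private input. With any input-independent $R$ (say uniform), $\mathbb{E}[\log i_A]\approx D(P\|R)$, which can be as large as $\log|\mathcal{U}|$ regardless of $D(P\|Q)$. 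In the actual protocol the communication is governed not by $\log i_A$ but by the number of hash bits Bob needs to isolate Alice's point among \emph{his} candidates, which is about $\log\bigl(P(a)/Q(a)\bigr)$ plus slack; its expectation over $a\sim P$ is $D(P\|Q)$. \textbf{(b)} Bob's acceptance test must be adaptive. With the fixed threshold $\alpha_i\le Q(u_i)/(cR(u_i))$ mirroring Alice's $\alpha_i\le P(u_i)/(cR(u_i))$, the probability that Alice's accepted point also passes Bob's test is $\sum_u\min(P(u),Q(u))$, i.e.\ one minus the total variation distance between $P$ and $Q$ --- the Holenstein one-shot guarantee --- and no choice of $R$ or $c$ pushes this to $1-\epsilon$ for arbitrary $\epsilon$. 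The Braverman--Rao fix is precisely to let Bob enlarge his candidate set to the points with $\alpha_i\le 2^{j}Q(u_i)/(cR(u_i))$ for increasing $j$ as more hash bits arrive: Alice's point enters the set once $2^{j}\ge P(a)/Q(a)$, the set has size roughly $2^{j}$, and the stream of hash bits keeps pace with $j$, which is exactly how the per-instance cost $\log(P(a)/Q(a))+O(\sqrt{\cdot})+2\log(1/\epsilon)$ arises.

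A smaller but real issue: your collision bound ``$2^{T}\cdot 2^{-T}\cdot\epsilon/2$'' is not what pairwise independence gives you; a union bound over $2^{T}$ candidates each colliding with probability $2^{-T}$ yields a constant, not $\epsilon/2$. One needs the number of hash bits to exceed the logarithm of the current candidate count by an additive $\log(1/\epsilon)$ at all times, which is one of the two sources of the $2\log(1/\epsilon)$ term in the statement (the other being the tail of $\log(P(a)/Q(a))$). The concern you flag about needing Bernstein-type concentration for the $O(\sqrt{D})$ correction is legitimate, but it is secondary to the two structural problems above.
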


We point out that in general, the correlated sampling procedure in Lemma~\ref{lem:BR_prot} can take more than one round of communication. This is because initially, neither Alice nor Bob knows $D(P || Q)$ and they will need to interactively ``discover'' it. In our case, we will be using correlated sampling in a ``distributional setup''. It turns out that this allows us to use a \emph{one-way version} of correlated sampling which is described in Lemma~\ref{lem:one_w_cor_samp} below.

\begin{lemma}\label{lem:one_w_cor_samp}
Let $\mu$ be a distribution over $(x,y)$ with marginal $\mu_X$ over $x$, and assume that $\mu$ is known to both Alice and Bob. Fix $\epsilon > 0$ and let Alice be given $x \sim \mu_X$. There is a \emph{one-way} public-coin protocol that uses at most
$$O( m \cdot I(X;Y)/\epsilon + \log(1/\epsilon)/\epsilon)$$
bits of communication such that with probability at least $1-\epsilon$ over the public coins of the protocol and the randomness of $x$, Alice and Bob agree on $m$ samples $y_1,y_2,\dots, y_m \text{ i.i.d } \sim \mu(Y | x)$ at the end of the protocol.
\end{lemma}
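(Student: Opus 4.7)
The plan is to apply Lemma~\ref{lem:BR_prot} to the product distributions $P := \mu_{Y|x}^m$ (Alice) and $Q := \mu_Y^m$ (Bob), and then convert the resulting interactive protocol into a one-way protocol with a \emph{fixed} message length via Markov's inequality over $x$. Treating all $m$ samples as a single BR instance, rather than running $m$ parallel BR instances, is crucial for matching the claimed bound: the $O(\sqrt{D(P\,\|\,Q)})$ overhead in Lemma~\ref{lem:BR_prot} then pays an additive $O(\sqrt{m\,D(\mu_{Y|x}\,\|\,\mu_Y)})$ once rather than $m$ times.

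Concretely, I would instantiate Lemma~\ref{lem:BR_prot} with $P,Q$ as above and accuracy parameter $\epsilon/2$. By the tensorization identity $D(\mu_{Y|x}^m \,\|\, \mu_Y^m) = m\,D(\mu_{Y|x}\,\|\,\mu_Y)$, the expected communication is $m\,D(\mu_{Y|x}\,\|\,\mu_Y) + O\!\bigl(\sqrt{m\,D(\mu_{Y|x}\,\|\,\mu_Y)} + \log(1/\epsilon)\bigr)$. Averaging over $x\sim \mu_X$, using $\Ex_x[D(\mu_{Y|x}\,\|\,\mu_Y)] = I(X;Y)$ together with Jensen's inequality ($\Ex_x[\sqrt{m D(\mu_{Y|x}\,\|\,\mu_Y)}] \leq \sqrt{m\,I(X;Y)}$) and the AM--GM bound $\sqrt{m\,I(X;Y)} \leq m\,I(X;Y)/2 + 1/2$, yields overall expected communication $L := O(m\,I(X;Y) + \log(1/\epsilon))$. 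Moreover, Alice's output $a = (y_1,\ldots,y_m)$ is distributed as $m$ i.i.d.\ samples from $\mu_{Y|x}$, and Bob's output $b$ agrees with $a$ except with probability at most $\epsilon/2$.

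The one-way simulation hinges on the structural property, stated in Lemma~\ref{lem:BR_prot}, that Bob's messages in BR consist only of single-bit ``continue/stop'' signals. Since Alice knows $P$ (through $x$), $Q$ (a fixed marginal of the commonly known $\mu$), and the public coins, she can simulate the entire BR transcript unilaterally and, in particular, predict when Bob would say ``stop''. Fix a budget $B := C\,(m\,I(X;Y) + \log(1/\epsilon))/\epsilon$ for a sufficiently large absolute constant $C$. Alice sends one flag bit followed by $B$ bits: if her simulation of BR terminates within $B$ transmitted bits she sets the flag to ``success'' and sends her side of the transcript (zero-padded); otherwise she sets ``failure''. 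Bob reads the flag and, on success, replays his side of BR on Alice's bits to produce $b$. Markov's inequality applied to $L$ bounds $\Pr[\text{BR exceeds }B\text{ bits}]$ by $\epsilon/2$; combined with BR's internal failure probability of $\epsilon/2$, the total failure probability is at most $\epsilon$. On the complementary success event, Alice and Bob agree on the $m$ i.i.d.\ samples $(y_1,\ldots,y_m)\sim \mu_{Y|x}$, and the communication is $B + 1 = O\bigl(m\,I(X;Y)/\epsilon + \log(1/\epsilon)/\epsilon\bigr)$ bits, matching the claim.

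The main conceptual obstacle is that BR is intrinsically interactive precisely because the parties need to adaptively discover $D(P\,\|\,Q)$, and so they cannot commit to a message length in advance. The distributional framing resolves this: by averaging over $x\sim \mu_X$, the unknown quantity $D(\mu_{Y|x}\,\|\,\mu_Y)$ becomes concentrated around the publicly known constant $I(X;Y)$ via Markov's inequality, which in turn lets Alice pick a single worst-case message length $B$ that works with probability $\geq 1-\epsilon$. The remaining work—verifying the coupling on the success event and tidying up constants in the encoding—is routine.
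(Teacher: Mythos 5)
Your proposal is correct and follows essentially the same route as the paper's proof: instantiate the Braverman--Rao protocol on the product distributions $P=\prod_{i=1}^m\mu(Y_i\mid x)$ and $Q=\prod_{i=1}^m\mu(Y_i)$ with error $\epsilon/2$, use tensorization of the KL divergence and $\Ex_{x}[D(\mu_{Y\mid x}\,\|\,\mu_Y)]=I(X;Y)$ to bound the expected cost, apply Markov's inequality to fix a single message length of order $(m\cdot I(X;Y)+\log(1/\epsilon))/\epsilon$, and exploit the single-bit continue/stop structure of Bob's messages to collapse the interaction into one round. The minor deviations---your explicit Jensen/AM--GM handling of the $\sqrt{D(P\,\|\,Q)}$ term, the extra flag bit, and having Alice simulate Bob's termination decision rather than simply sending the first $\ell$ bits she would transmit if Bob never stopped---are cosmetic and do not change the argument.
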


\begin{proof}
When $x$ is Alice's input, we can consider running the protocol in Lemma~\ref{lem:BR_prot} on the distributions $P \triangleq \prod_{i=1}^m \mu(Y_i | x)$ and $Q \triangleq \prod_{i=1}^m \mu(Y_i)$ and with error parameter $\epsilon/2$. Let $\Pi$ be the resulting protocol transcript. The expected communication cost of $\Pi$ is at most
\begin{align}\label{eq:markov_ex}
\Ex_{x \sim \mu_X}[ O(D(P || Q)) + O(\log(1/\epsilon))] &= O(\Ex_{x \sim \mu_X}[ D(P || Q)]) + O(\log(1/\epsilon)) \notag\\ 
&= O(m \cdot I(X;Y)) + O(\log(1/\epsilon)),
\end{align}
where the last equality follows from the fact that
\begin{align*}
\Ex_{x \sim \mu_X}[ D(P || Q)] &= \Ex_{x \sim \mu_X}\bigg[ \Ex_{y_1 | x, \dots,y_m | x} \bigg[ \log \bigg(\frac{\prod_{i=1}^m \mu(y_i | x)}{\prod_{i=1}^m \mu(y_i)} \bigg) \bigg] \bigg]\\ 
&= \displaystyle\sum\limits_{i=1}^m \Ex_{x \sim \mu_X}\bigg[ \Ex_{y_1 | x, \dots,y_m | x} \bigg[ \log \bigg(\frac{\mu(y_i | x)}{\mu(y_i)} \bigg) \bigg] \bigg]\\ 
&= \displaystyle\sum\limits_{i=1}^m \Ex_{x \sim \mu_X}\bigg[ \Ex_{y_i | x} \bigg[ \log \bigg(\frac{\mu(y_i | x)}{\mu(y_i)} \bigg) \bigg] \bigg]\\ 
&= \displaystyle\sum\limits_{i=1}^m \Ex_{(x,y) \sim \mu}\bigg[ \log \bigg(\frac{\mu(y | x)}{\mu(y)} \bigg) \bigg]\\ 
&= m \cdot I(X;Y).
\end{align*}
By Markov's inequality applied to (\ref{eq:markov_ex}), we get that with probability at least $1-\epsilon/2$, the length of the transcript $\Pi$ is at most
$$ \ell \triangleq O(m \cdot I(X;Y) /\epsilon) + O(\log(1/\epsilon)/\epsilon).$$

Conditioned on the event $E$ that the length of $\Pi$ is at most $\ell$ bits, the total number of bits sent by Alice to Bob is also at most $\ell$.

Note that Lemma~\ref{lem:BR_prot} guarantees that each message of Bob in $\Pi$ consists of a single bit indicating if the protocol should terminate or if Alice should send the next message. Hence, Bob's messages do not influence the actual bits sent by Alice; they only determine how many bits are sent by her.

In the new one-way protocol $\Pi'$, Alice sends to Bob, in a single shot, the first $\ell$ bits that she would have sent him in protocol $\Pi$ if he kept refusing to terminate. Upon receiving this message, Bob completes the simulation of protocol $\Pi$. The error probability of the new protocol $\Pi'$ is the probability that either Alice did not send enough bits or that the protocol $\Pi$ makes an error, which by a union bound is at most
$$ \Pr[\overline{E}] + \epsilon/2 \le \epsilon/2 + \epsilon/2 = \epsilon$$
where $\overline{E}$ denotes the complement of event $E$.
\end{proof}

\subsection{Proof of Theorem~\ref{thm:one-way}}\label{subsec:proof_th_ub}

Recall that in the contextual setting, Alice's input is $(f,x)$ and Bob's input is $(g,y)$, where $(f,g)\in\owF^\mu_{k,\epsilon,\delta}$ and $(x,y) \sim \mu$. Let $\Pi$ be the one-way protocol for $g$ in the standard setting that shows that $\owCC^\mu_\epsilon(g) \leq k$. Note that $\Pi$ can be
described by an integer $L \leq 2^k$ and functions
$\pi\colon X \to [L]$ and $\{B_i \colon Y \to \zo \}_{i \in [L]}$, such that Alice's message on input $x$ is $\pi(x)$, and Bob's output on message $i$ from Alice and on input $y$ is $B_i(y)$.
We use this notation below. We also set the parameter $m = \Theta\big(c (k+\log(1/\theta))/\theta^2 \big)$, which is chosen such that $2^k \cdot \exp(-\theta^2m/75)\leq 2\theta/5$.

\paragraph{The protocol.} 
Algorithm
\ref{Fig:Protocol} describes the protocol we employ in the contextual setting. Roughly speaking, the protocol works as follows. First, Alice and Bob run the one-way correlated sampling procedure given by Lemma~\ref{lem:one_w_cor_samp} in order to sample $y_1,y_2\dots,y_{m} \text{ i.i.d. } \sim \mu_{Y | x}$. Then, Alice sends the sequence  $(f_x(y_1),\ldots,f_x(y_m))$ to Bob. Bob enumerates over $i \in [L]$ and counts the fraction of $z \in \{y_1,\ldots,y_m\}$ for which $B_i(z)\neq f_x(z)$. For the index $i$ which minimizes this fraction, Bob outputs $B_i(y)$ and halts. 

  \begin{algorithm}[H]
    \caption{The protocol that handles contextual
      uncertainty} \label{Fig:Protocol}%
    \textbf{The setting:} Let $\mu$ be a probability distribution over a message
    space $X\times Y$. Alice and Bob are given functions $f$ and $g$, and inputs
    $x$ and $y$, respectively, where $(f,g)\in\owF^\mu_{k,\epsilon,\delta}$ and
    $(x,y) \sim \mu$.

    \textbf{The protocol:}
    \begin{enumerate}
    \item Alice and Bob run one-way correlated sampling with error parameter set to $(\theta/10)^2$ in order to sample $m$ values $Z = \{ y_1, y_2, \ldots,
      y_m \} \subseteq Y$ each sampled independently according to $\mu_{Y | x}$.
    \item Alice sends $\{ f_x(y_i) \}_{i \in [m]}$ to Bob.
    \item 
      For every $i \in [L]$, Bob computes
      $\err_i \triangleq \frac{1}{m} \sum_{j = 1}^m \1( B_i(y_j) \neq f_x(y_j))$.
      Let $i_{\min} \triangleq \mathop{\mathrm{argmin}}_{i\in [L]}\{\err_i\}$.
      Bob outputs $B_{i_{\min}}(y)$ and halts. 
    \end{enumerate}
\end{algorithm}

\paragraph{Analysis.} Observe that by Lemma~\ref{lem:one_w_cor_samp}, the correlated sampling procedure requires $O( m \cdot I(X;Y)/\theta^2 + \log(1/\theta)/\theta^2)$ bits of communication. Thus, the total communication of our protocol is at most
\begin{align*}
O( m \cdot I(X;Y)/\theta^2 + \log(1/\theta)/\theta^2) + m = \frac{c \left(k+\log\left(\frac{1}{\theta}\right)\right)}{\theta^2} \cdot \left(1+\frac{I(X;Y)}{\theta^2}\right) \text{ bits }
\end{align*}
for some absolute constant $c$, as promised. The next lemma establishes the correctness of the protocol.

\begin{lemma}\label{le:ow_ub_analysis}
$\Pr_{\Pi, (x,y)\sim\mu}\left[B_{i_{\min}}(y) \ne g(x,y)\right] \le \epsilon + 2\delta + \theta$.
\end{lemma}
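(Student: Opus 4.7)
The plan is to decompose the error of the protocol into three contributions and bound each separately: (i) the failure probability of the correlated sampling subroutine, (ii) the failure of a uniform concentration bound for the empirical errors $\err_i$ about their population counterparts, and (iii) the irreducible error inherited from the closeness of $f$ to $g$ and from the guarantee of the underlying standard protocol $\Pi$. To set up the analysis, fix $x \in X$ and introduce, for each $i \in [L]$, the population error $\err_i^*(x) \triangleq \Pr_{y \sim \mu_{Y|x}}[B_i(y) \neq f(x,y)]$, together with the local quantities $\epsilon_x \triangleq \Pr_{y \sim \mu_{Y|x}}[B_{\pi(x)}(y) \neq g(x,y)]$ and $\delta_x \triangleq \Pr_{y \sim \mu_{Y|x}}[f(x,y) \neq g(x,y)]$. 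By the hypotheses on $\Pi$ and on $(f,g)$, we have $\Ex_{x \sim \mu_X}[\epsilon_x] \leq \epsilon$ and $\Ex_{x \sim \mu_X}[\delta_x] \leq \delta$.

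First, I would invoke Lemma~\ref{lem:one_w_cor_samp} with error parameter $(\theta/10)^2$ to assert that, except on an event $\overline{\cale_1}$ of probability at most $\theta^2/100$, the samples $y_1, \dots, y_m$ are distributed i.i.d.\ as $\mu_{Y|x}$. Conditional on $\cale_1$ and on $x$, the indicators $\1(B_i(y_j) \neq f_x(y_j))$ are i.i.d.\ Bernoulli with mean $\err_i^*(x)$, so Proposition~\ref{prop:chernoff} (Hoeffding) yields $\Pr[|\err_i - \err_i^*(x)| > \theta/10] \leq 2 \exp(-\Omega(\theta^2 m))$ for each fixed $i$. A union bound over the $L \leq 2^k$ indices produces the uniform deviation guarantee $|\err_i - \err_i^*(x)| \leq \theta/10$ for all $i$, with failure probability $2^{k+1} \exp(-\Omega(\theta^2 m))$; the choice of $m$ in the protocol is designed precisely so that this quantity is at most $O(\theta)$. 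Crucially, the family $\{B_i\}_{i \in [L]}$ is determined by the fixed standard protocol $\Pi$ and is independent of the samples, which is what makes the union bound valid.

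On the joint good event (successful sampling plus uniform concentration), the minimality of $i_{\min}$ gives $\err_{i_{\min}} \leq \err_{\pi(x)}$, and combining this with the uniform deviation bound yields $\err_{i_{\min}}^*(x) \leq \err_{\pi(x)}^*(x) + 2\theta/10$. Two applications of the triangle inequality for Hamming distance then give $\err_{\pi(x)}^*(x) \leq \epsilon_x + \delta_x$ and $\Pr_{y \sim \mu_{Y|x}}[B_{i_{\min}}(y) \neq g(x,y)] \leq \err_{i_{\min}}^*(x) + \delta_x$. Taking expectation over $x \sim \mu_X$ and adding back the probabilities of $\overline{\cale_1}$ and of the Hoeffding failure event, the total error is bounded by
\begin{equation*}
\epsilon + 2\delta + \tfrac{\theta}{5} + \tfrac{\theta^2}{100} + 2^{k+1} \exp(-\Omega(\theta^2 m)) \leq \epsilon + 2\delta + \theta,
\end{equation*}
as desired. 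The only real subtlety is the layered conditioning on the different sources of randomness (the public coins driving correlated sampling, the resulting samples $y_1,\dots,y_m$, and finally Bob's test input $y$); the argument closes cleanly because correlated sampling guarantees the correct conditional law of the samples given $x$ and because $(\pi, \{B_i\})$ is a fixed artifact of the standard protocol that is oblivious to those samples.
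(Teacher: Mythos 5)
Your proposal is correct and follows essentially the same route as the paper's proof: the same decomposition into correlated-sampling failure, a Chernoff-plus-union bound over the $L \le 2^k$ representatives $\{B_i\}$, the two triangle inequalities $\gamma_{\pi(x),x} \le \epsilon_x + \delta_x$ and the final $+\delta_x$ passage from $f$ to $g$, and an expectation over $x \sim \mu_X$. The only differences are cosmetic (deviation $\theta/10$ versus the paper's $\theta/5$, and adding the correlated-sampling failure probability globally rather than via the paper's averaging argument over the set $W$ of good $x$), and they do not affect correctness.
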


\begin{proof}
We start with some notation. For $x \in X$, let $\delta_x \triangleq \delta_{\mu_{Y | x}}(f_x,g_x)$ and let $\epsilon_x \triangleq \delta_{\mu_{Y | x}}(g_x,B_{\pi(x)})$.
Note that by definition, $\delta = \Ex_{x \sim \mu_X}[\delta_x]$ and
$\epsilon = \Ex_{x \sim \mu_X}[\epsilon_x]$.
For $i \in [L]$, let $\gamma_{i,x} \triangleq \delta_{\mu_{Y | x}}(f_x,B_i)$.
Note that by the triangle inequality,
\begin{align}\label{eq:triangle}
  \gamma_{\pi(x),x} = \delta_{\mu_{Y | x}}(f_x,B_{\pi(x)}) \leq \delta_x + \epsilon_x.
\end{align}

In what follows, we will analyze the probability that $B_{i_{\min}}(y) \ne g(x,y)$
by analyzing the estimate $\err_i$ and the index $i_{\min}$ computed in the above protocol. Note that $\err_i = \err_i(x)$ computed above attempts to estimate $\gamma_{i,x}$, and that both $\err_i$ and $i_{\min}$ are
functions of $x$.

Note that Lemma~\ref{lem:one_w_cor_samp} guarantees that correlated sampling succeeds with probability at least $1-\theta^2/100$. Henceforth, we condition on the event that correlated sampling succeeds (we will account for the event where this does happen at the end). By the Chernoff bound,
we have for every $x$ and $i \in [L]$
\begin{align*}
  \Pr_{y_1,\ldots,y_m\sim\mu_{Y | x}} \left[ \left|\gamma_{i,x} - \err_i\right| >
    \frac{\theta}{5} \right] \leq \exp\left(-\frac{\theta^2\cdot m}{75}\right).
\end{align*}
By a union bound, we have for every $x\in X$, 
\begin{align*}
  \Pr_{y_1,\ldots,y_m\sim\mu_{Y | x}} \left[ \exists i \in [L] \mbox{ s.t. } \left|\gamma_{i,x}
      - \err_i\right| > \frac{\theta}{5} \right] \leq L \cdot
  \exp\left(-\frac{\theta^2\cdot m}{75}\right)
  \leq \frac{2\theta}{5},
\end{align*}
where the last inequality follows from our choice of $m = \Theta\big(c \cdot (k+\log(1/\theta))/\theta^2 \big)$.

Now assume that for all $i \in [L]$, we have 
that $|\gamma_{i,x} - \err_i| \leq \theta/5$, which we refer to below as the ``Good Event". Then, for $i_{\min}$, we have 
\begin{align*}
\gamma_{i_{\min},x} & \leq \err_{i_{\min}} + \theta/5 \tag{since we assumed the
Good Event} \\
& \leq \err_{\pi(x)} + \theta/5 \tag{by definition of $i_{\min}$}\\
& \leq \gamma_{\pi(x),x} + 2\theta/5 \tag{since we assumed the Good Event}\\
& \leq \delta_x + \epsilon_x + 2\theta/5. \tag{By \Cref{eq:triangle}}
\end{align*}
Let $W \subseteq \{0,1\}^n$ be the set of all $x$ for which correlated sampling succeeds with probablity at least $1-\theta/10$ (over the internal randomness of the protocol). By Lemma~\ref{lem:one_w_cor_samp} and an averaging argument, $\Pr_{x \sim \mu_X}[ x \notin W] \le \theta/10$. Thus,
\begin{align*}
\Pr_{\Pi, (x,y)\sim\mu}\left[B_{i_{\min}}(y) \ne f(x,y)\right] &\le \Ex_{x \sim \mu_X | x \in W}\bigg[ \Pr_{\Pi, y \sim \mu_{Y | x}}\left[B_{i_{\min}}(y) \ne f(x,y)\right] \bigg] + \theta/10\\ 
&\le \Ex_{x \sim \mu_X | x \in W}\bigg[ \Pr_{y_1,\ldots,y_m, y \sim \mu_{Y | x}}\left[B_{i_{\min}}(y) \ne f(x,y)\right] \bigg] + \theta/5\\ 
&= \Ex_{x \sim \mu_X | x \in W}\big[ \gamma_{i_{\min},x}  \big] + \theta/5\\ 
&\le \Ex_{x \sim \mu_X | x \in W}\big[ \delta_x + \epsilon_x \big] + 3\theta/5\\ 
&\le  \Ex_{x \sim \mu_X}\big[ \delta_x + \epsilon_x \big] + \theta\\ 
&= \delta+\epsilon + \theta
\end{align*}
where the third inequality follows from the fact that the Good Event occurs with probability at least $1-2\theta/5$, and from the corresponding upper bound on $\gamma_{i_{\min},x}$. The other inequalities above follow from the definition of the set $W$ and the fact that $\Pr_{x \sim \mu_X}[ x \notin W] \le \theta/10$. Finally, since $\delta(f,g) \leq \delta$, we have that Bob's output does not equal
$g(x,y)$ (which is the desired output) with probability at most $\epsilon + 2\delta + \theta$.
\end{proof}

\section{Lower Bound for Non-Product Distributions}\label{sec:lb_non_prod}
In this section, we prove Theorem~\ref{thm:main_lb_non_prod}. We start by defining the class of function pairs and distributions that will be used. Consider the parity functions on subsets of bits of the string $x \oplus y \in \zo^n$. Specifically, for every $S \subseteq [n]$, let $f_S\colon\zo^n\times\zo^n\to\zo$ be defined as $f_S(x,y) \triangleq \oplus_{i \in S} (x_i \oplus y_i)$.
Let $q = q(n) > 0$ and define \begin{equation} \calf_q \triangleq \{(f_S,f_T) : |S \triangle T| \leq q\cdot n\}. \end{equation}

Next, we define a probability distribution $\mu_p$ on $\zo^n \times \zo^n$ where $p=p(n)$. We do so by giving a procedure to sample according to $\mu_p$. To sample a pair $(x,y) \sim \mu_p$, we draw $x \in_R \{0,1\}^n$ and let $y$ be a $p$-noisy copy of $x$, i.e., $y \sim N_{p}(x)$. Here, $N_p(x)$ is the distribution on $\zo^n$ that outputs $y \in \zo^n$ such that, independently, for each $i \in [n]$, $y_i = 1-x_i$ with probability $p$, and $y_i = x_i$  with probability $1-p$. In other words, $\mu_p(x,y) = 2^{-n} \cdot p^{|x \oplus y|} \cdot (1-p)^{n-|x \oplus y|}$ for every $(x,y) \in \zo^n \times \zo^n$. Here, $|z|$ denotes the Hamming weight of $z$, for any $z \in \zo^n$.

We will prove Lemmas \ref{le:two_way_ubs} and \ref{le:two_way_lb} below about the function class $\calf_q$ and the distribution $\mu_p$.

\begin{lemma}\label{le:two_way_ubs}
For any $p = p(n)$ and $q = q(n)$, it holds that $\calf_q \subseteq \owF^{\mu_p}_{1,0, p q n}$.
\end{lemma}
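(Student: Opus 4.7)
The plan is to verify directly the two conditions that define $\owF^{\mu_p}_{1,0,pqn}$: that every pair $(f_S,f_T)\in\calf_q$ consists of functions with one-way zero-error communication complexity at most $1$ under $\mu_p$, and that $\delta_{\mu_p}(f_S,f_T)\le pqn$.

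For the complexity bound, I would exploit the linearity of parity over $\F_2$. Observe that for every $S\subseteq[n]$,
\[
f_S(x,y)=\bigoplus_{i\in S}(x_i\oplus y_i)=\Bigl(\bigoplus_{i\in S}x_i\Bigr)\oplus\Bigl(\bigoplus_{i\in S}y_i\Bigr),
\]
so the one-way protocol in which Alice sends the single bit $\bigoplus_{i\in S}x_i$ and Bob outputs its XOR with $\bigoplus_{i\in S}y_i$ computes $f_S$ correctly on \emph{every} input $(x,y)$. This gives $\owCC^{\mu_p}_0(f_S)\le 1$, and identically $\owCC^{\mu_p}_0(f_T)\le 1$.

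For the distance bound, write $z\triangleq x\oplus y$ so that $f_S(x,y)\oplus f_T(x,y)=\bigoplus_{i\in S\triangle T}z_i$. Under $\mu_p$, the coordinates of $z$ are i.i.d.\ $\mathrm{Ber}(p)$, because sampling $(x,y)\sim\mu_p$ draws $x$ uniformly and flips each coordinate independently with probability $p$ to obtain $y$. Hence $f_S$ and $f_T$ disagree only when at least one of the $|S\triangle T|$ independent bits $\{z_i\}_{i\in S\triangle T}$ equals $1$. A union bound yields
\[
\delta_{\mu_p}(f_S,f_T)=\Pr_{(x,y)\sim\mu_p}\Bigl[\bigoplus_{i\in S\triangle T}z_i=1\Bigr]\le \sum_{i\in S\triangle T}\Pr[z_i=1]=p\cdot|S\triangle T|\le pqn,
\]
where the last inequality uses the defining property $|S\triangle T|\le qn$ of membership in $\calf_q$.

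Neither step presents a real obstacle; the lemma is essentially a check that the natural parity protocol uses one bit and that parities over small symmetric differences must be close under product-noise distributions. The only small subtlety worth flagging is making sure the union-bound form of the distance estimate is used rather than the exact expression $\tfrac12(1-(1-2p)^{|S\triangle T|})$, since the former gives the clean bound $pqn$ uniformly in $p\in[0,1]$ without any case analysis on whether $p\le 1/2$.
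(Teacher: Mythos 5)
Your proposal is correct and matches the paper's argument: the paper likewise splits the lemma into the one-bit protocol (Alice sends $\oplus_{i\in S}x_i$) and the distance bound via $\Pr[f_{S\triangle T}(x\oplus y)=1]\le 1-(1-p)^{|S\triangle T|}\le pqn$, which is the same first-moment estimate you obtain by union bound. No gaps.
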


In words, Lemma~\ref{le:two_way_ubs} says that any pair of functions in $\calf_q$ are $(p q n)$-close in $\delta_{\mu_p}$-distance, and any function in $\calf_q$ has a one-way zero-error protocol with a single bit of communication. Lemma~\ref{le:two_way_lb} lower bounds the contextually uncertain communication complexity of $\calf_q$ under distribution $\mu_p$.

\begin{lemma}\label{le:two_way_lb}
For any $p=p(n)$, $q=q(n)$ and $\epsilon > 0$, it holds that:
$$\CCU^{\mu_p}_{\frac12 - \epsilon}(\calf_q) \geq 
\gamma \cdot \min\{p\cdot n,(q/2)\cdot n\} -\log(1/\epsilon) + \eta,$$
where $\eta = 2^{-\Theta(q \cdot n)}/\epsilon$, and $\gamma > 0$ is an absolute constant.
\end{lemma}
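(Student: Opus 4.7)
The plan is to reduce the contextually uncertain problem to a standard distributional communication complexity problem and then apply the $\ell_2$ discrepancy method. Concretely, I will pick a distribution $\nu$ on pairs $(f_S, f_T)$ that is concentrated on $\calf_q$, and consider the two-argument function $F((S,x),(T,y)) \triangleq f_T(x,y)$ under the input distribution $\nu \times \mu_p$. Any protocol that $(\tfrac12 - \epsilon)$-computes $\calf_q$ under contextual uncertainty has correlation at least $2\epsilon - O(\eta_0)$ with $F$ under $\nu \times \mu_p$, where $\eta_0 = 2^{-\Theta(qn)}$ accounts for the rare event that $\nu$ produces a pair outside $\calf_q$. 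The standard discrepancy inequality then yields
\[ 2^{\mathrm{CC}} \;\geq\; \frac{2\epsilon - O(\eta_0)}{\mathrm{disc}_{\nu \times \mu_p}(F)}, \]
and it remains to upper bound the discrepancy by the spectral norm of the sign-weighted communication matrix; the $\eta$ slack in the claim will come out of this division.

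The distribution $\nu$ will be chosen so that the communication matrix has a clean tensor structure. Let $S \in \{0,1\}^n$ be uniform and independent of $R \in \{0,1\}^n$ with i.i.d.\ $\mathrm{Bernoulli}(q/2)$ coordinates, and set $T = S \oplus R$; a Chernoff bound yields $|R| \leq qn$ except with probability $2^{-\Theta(qn)}$. Because $\mu_p$ and $\nu$ both factor coordinatewise, and since $(-1)^{f_T(x,y)} = \prod_{i=1}^n (-1)^{T_i(x_i \oplus y_i)}$ also factors, the sign-weighted matrix $M[(S,x),(T,y)] \triangleq (\nu \times \mu_p)(S,x,T,y) \cdot (-1)^{f_T(x,y)}$ equals the tensor power $M_1^{\otimes n}$, where $M_1$ is the single-coordinate $4 \times 4$ matrix indexed by $(s,x), (t,y) \in \{0,1\}^2$ with entries $M_1[(s,x),(t,y)] = \tfrac{1}{4}\,\alpha(s \oplus t)\,\beta(x \oplus y)\,(-1)^{t(x \oplus y)}$ and $\alpha(0) = 1-q/2,\ \alpha(1) = q/2,\ \beta(0) = 1-p,\ \beta(1) = p$. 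The Cauchy--Schwarz-based cut-norm inequality then gives $\mathrm{disc}_{\nu \times \mu_p}(F) \leq 4^n \|M\|_{\mathrm{op}} = 4^n \|M_1\|_{\mathrm{op}}^n$.

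The final and most delicate step is to bound $\|M_1\|_{\mathrm{op}}$. Writing $a = 1-q/2,\ b = q/2,\ c = 1-p,\ d = p$, the matrix $M_1$ has the block form $\tfrac{1}{4}\begin{pmatrix} aC & bD \\ bC & aD \end{pmatrix}$ with symmetric $2 \times 2$ blocks $C = \begin{pmatrix} c & d \\ d & c \end{pmatrix}$ and $D = \begin{pmatrix} c & -d \\ -d & c \end{pmatrix}$. A Hadamard change of basis $I_2 \otimes H_2$ block-diagonalizes $M_1 M_1^T$ into two identical $2 \times 2$ blocks, and a direct computation shows that the largest singular value satisfies $16\|M_1\|_{\mathrm{op}}^2 = X + \sqrt{Y^2 + Z^2}$ with $X = (a^2+b^2)(c^2+d^2),\ Y = 2cd(a^2 - b^2),\ Z = 2ab(c^2 + d^2)$. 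The main obstacle will be that the naive estimate $\sqrt{Y^2+Z^2} \leq |Y| + |Z|$ only yields $X + \sqrt{Y^2+Z^2} \leq 1 - \Theta(pq)$, which is too weak by a factor of $\min(p,q)$; the sharper inequality $\sqrt{Y^2+Z^2} \leq |Y| + |Z| - \tfrac{1}{2}\min(|Y|,|Z|)$, combined with the identities $a + b = c + d = 1$, instead gives $X + \sqrt{Y^2+Z^2} \leq 1 - 2\gamma\min(p,\, q/2)$ for an absolute constant $\gamma > 0$. Plugging this into $\mathrm{disc}_{\nu \times \mu_p}(F) \leq (X + \sqrt{Y^2+Z^2})^{n/2} \leq e^{-\gamma \min(p, q/2) n}$ and taking logarithms yields the claimed lower bound, with the additive $\eta = 2^{-\Theta(qn)}/\epsilon$ absorbing the small loss from the event $\{|R| > qn\}$.
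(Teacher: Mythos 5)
Your proposal is correct and follows the same overall route as the paper's proof: reduce to the standard distributional problem of computing $F((S,x),(T,y)) = f_T(x\oplus y)$ under the product distribution $\cald_q\times\mu_p$ (with $T$ a $(q/2)$-noisy copy of a uniform $S$), absorb the probability $2^{-\Theta(qn)}$ that the sampled pair falls outside $\calf_q$ into the $\eta$ slack, apply the discrepancy bound, and exploit the coordinatewise product structure of the distribution and of $(-1)^{f_T(x\oplus y)}$ to reduce the discrepancy estimate to the operator norm of a single $4\times 4$ matrix. The one place you genuinely diverge is that last spectral computation. The paper first reduces without loss of generality to $q=2p$ (Alice injects private noise to equalize the two correlations) and then exhibits explicit eigenvectors of $N^T N$, obtaining the exact value $\|N\|^2 = 1+2a^2+a^4+2a\sqrt{2(a^4+1)}$ with $a=p/(1-p)$ and Taylor-expanding. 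You instead keep $p$ and $q/2$ as separate parameters, block-diagonalize $M_1 M_1^T$ by a Hadamard conjugation to get the closed form $16\|M_1\|^2 = X+\sqrt{Y^2+Z^2}$ (which I checked agrees with the paper's eigenvalue when $q=2p$), and then use the elementary inequality $\sqrt{Y^2+Z^2}\le |Y|+|Z|-\tfrac12\min(|Y|,|Z|)$ together with $X+|Y|+|Z| = 1-2pq(1-p)$ and $\min(|Y|,|Z|)=\Theta(\min(p,q/2))$. Your version avoids the $q=2p$ reduction and makes transparent why the exponent is $\min(pn,qn/2)$ rather than the too-weak $pqn$ that the triangle inequality alone would give; like the paper's argument (which explicitly assumes $p\le 1/10$), your final estimate $1-\Theta(\min(p,q/2))$ requires $p$ and $q$ small, which is harmless since the lemma is applied with $p=q=\sqrt{\delta/n}$.
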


Note that applying Lemmas \ref{le:two_way_ubs} and \ref{le:two_way_lb} with $\calf = \calf_q$, $\mu = \mu_p$ and $p = q = \sqrt{\delta/n}$  (where $\delta > 0$ is any constant) implies Theorem~\ref{thm:main_lb_non_prod}.

In Subsection~\ref{subsec:napp_two_way_ubs} below, we prove Lemma~\ref{le:two_way_ubs} which follows from two simple propositions. The main part of the rest of this section is dedicated to the proof of Lemma~\ref{le:two_way_lb}. The idea behind the proof of Lemma~\ref{le:two_way_lb} is to reduce the problem of computing $\calf_q$
under $\mu_p$ with contextual uncertainty, into the problem of computing a related function in the standard distributional communication complexity model
(i.e., \emph{without} uncertainty) under a related distribution. We then use the \emph{discrepancy method} to prove a lower bound on the communication
complexity of the new problem. This task itself reduces to
upper bounding the spectral norm of a certain communication matrix. The choice of our underlying distribution then implies a tensor structure for this matrix, which reduces the spectral norm computation to bounding  the largest singular value of an explicit family of $4 \times 4 $ matrices.

We point out that our lower bound in Lemma~\ref{le:two_way_lb} is essentially tight up to a logarithmic factor. Namely, one can show using a simple one-way hashing protocol that for any constant $\epsilon > 0$, $\owCCU^{\mu_p}_{\epsilon}(\calf_q) \le O(r \log{r})$ with $r \triangleq \min\{2p\cdot n, q\cdot n\}$.


\subsection{Proof of Lemma~\ref{le:two_way_ubs}}\label{subsec:napp_two_way_ubs}

Lemma~\ref{le:two_way_ubs} follows from Propositions~\ref{prop:distance} and \ref{prop:owCC} below. We first show that every pair of functions in $\calf_q$ are close under the distribution $\mu_p$.
\begin{proposition}\label{prop:distance}
For every $(f,g) \in \calf_q$, it holds that $\delta_{\mu_p}(f,g) \leq p q n$. 
\end{proposition}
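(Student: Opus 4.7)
The plan is to reduce the question to a simple computation about the bitwise XOR $z = x \oplus y$. I would first note that under $\mu_p$, if we let $z = x \oplus y$, then $z$ is distributed so that each coordinate $z_i$ is independently $\mathrm{Bernoulli}(p)$, since $\mu_p$ is defined by drawing $x$ uniformly and then flipping each bit independently with probability $p$ to obtain $y$. Thus the marginal distribution of $z$ under $\mu_p$ matches $N_p(0^n)$.

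Next, I would rewrite the event $f_S(x,y)\ne f_T(x,y)$ in terms of $z$. Fixing $(f_S, f_T) \in \calf_q$ and setting $R = S \triangle T$, a direct calculation gives
\begin{equation*}
f_S(x,y)\oplus f_T(x,y) \;=\; \bigoplus_{i\in S}(x_i\oplus y_i)\;\oplus\;\bigoplus_{i\in T}(x_i\oplus y_i) \;=\; \bigoplus_{i\in R} z_i,
\end{equation*}
using the fact that any index in $S\cap T$ contributes twice and cancels. So $\delta_{\mu_p}(f_S,f_T)$ equals the probability that $\bigoplus_{i\in R}z_i = 1$ under the product $\mathrm{Bernoulli}(p)$ distribution on $z$.

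Finally, I would upper bound this probability by a union bound: the XOR can only be $1$ if at least one $z_i$ with $i\in R$ equals $1$, so
\begin{equation*}
\Pr\Bigl[\bigoplus_{i\in R} z_i = 1\Bigr] \;\le\; \Pr[\exists\, i\in R : z_i = 1] \;\le\; \sum_{i\in R}\Pr[z_i = 1] \;=\; p\,|R| \;\le\; p\,q\,n,
\end{equation*}
where the last inequality uses $|R|=|S\triangle T|\le qn$ from the definition of $\calf_q$. This gives the claimed bound $\delta_{\mu_p}(f_S,f_T)\le pqn$. The argument is essentially a one-line computation once the reduction to $z$ is made; there is no substantive obstacle, and one could equivalently use the exact identity $\Pr[\bigoplus_{i\in R}z_i=1]=\tfrac{1}{2}(1-(1-2p)^{|R|})$ together with Bernoulli's inequality, but the union bound is cleaner.
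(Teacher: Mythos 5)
Your proof is correct and follows essentially the same route as the paper: both reduce to the event that $\bigoplus_{i\in S\triangle T}(x_i\oplus y_i)=1$ and bound it by the probability that at least one coordinate in $S\triangle T$ flips, which you bound by a union bound as $p\,|S\triangle T|$ while the paper writes it as $1-(1-p)^{|S\triangle T|}\le pqn$ --- the same estimate. (Incidentally, your use of $|S\triangle T|\le qn$ is the correct reading of the definition of $\calf_q$; the paper's proof writes $|S\triangle T|\le q$, which is a typo.)
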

\begin{proof}
Any pair of functions $(f,g) \in \calf_q$ is of the form $f = f_S$ and $g = f_T$ with $|S \triangle T| \le q$. Hence,
\begin{equation*}
\Pr_{(x,y) \sim \mu}[ f(x, y) \neq g(x, y)] = \Pr_{(x,y) \sim \mu}[ f_{S \triangle T}(x \oplus y) =1] \leq 1-(1-p)^{|S \triangle T|} \leq 1-(1-p)^{qn} \leq pqn.
\end{equation*}
\end{proof}

Next, we show that there is a simple one-way communication protocol that allows Alice and Bob to compute $f_S$ (for any $S \subseteq [n]$) with just a single bit of communication.


\begin{proposition}\label{prop:owCC}
$\owCC(f_S) = 1$.
\end{proposition}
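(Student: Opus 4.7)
The plan is to exhibit an explicit one-way protocol with a single bit of communication and note that this matches the trivial lower bound (for non-constant $f_S$), which suffices since the paper uses $\owCC(f_S) = 1$ as an upper bound throughout.

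First I would rewrite $f_S(x,y) = \oplus_{i \in S}(x_i \oplus y_i) = \bigl(\oplus_{i \in S} x_i\bigr) \oplus \bigl(\oplus_{i \in S} y_i\bigr)$, using commutativity and associativity of XOR. Since we are in the standard communication complexity model here (not the uncertain one), Alice knows the function $f_S$ and therefore knows the set $S$, so she can compute $a \triangleq \oplus_{i \in S} x_i$ from her input alone. Symmetrically, Bob can compute $b \triangleq \oplus_{i \in S} y_i$ from his own input.

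The protocol is then: Alice sends the single bit $a$ to Bob; Bob outputs $a \oplus b$. Correctness is immediate from the identity above, and the communication cost is exactly one bit, so $\owCC(f_S) \le 1$. For the matching lower bound, observe that whenever $S \neq \emptyset$ the function $f_S$ is non-constant (fix any $i \in S$ and flip $x_i$), so no zero-communication protocol can compute it correctly on all inputs, giving $\owCC(f_S) \ge 1$.

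There is no real obstacle here; the statement is essentially a restatement of the linearity of $f_S$ over $\mathbb{F}_2$ combined with the fact that in the non-uncertain model each party has full knowledge of the function being computed. The only subtlety to flag is the degenerate case $S = \emptyset$, where $f_S \equiv 0$ and technically $\owCC(f_S) = 0$; this does not affect the uses of the proposition in the lower bound argument of Lemma~\ref{le:two_way_lb}, since a one-bit protocol trivially exists in that case as well.
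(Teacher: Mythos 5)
Your proof is correct and follows exactly the paper's argument: split $f_S(x,y)$ into $\bigl(\oplus_{i\in S}x_i\bigr)\oplus\bigl(\oplus_{i\in S}y_i\bigr)$ and have Alice send her parity bit. The additional remarks on the matching lower bound and the degenerate case $S=\emptyset$ are fine but not needed, since the proposition is only ever used as an upper bound.
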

\begin{proof}
  Recall that $f_S(x,y) = \oplus_{i \in S} (x_i \oplus y_i)$. We write this as $f_S(x,y) = \left(\oplus_{i \in S} (x_i)\right)
  \oplus \left(\oplus_{i \in S} (y_i)\right)$. This leads to the simple one-way protocol where Alice computes
  $b=\oplus_{i \in S} (x_i)$ and sends the single bit result of the computation to Bob. Bob can now compute $b\oplus
  \left(\oplus_{i \in S} (y_i)\right) = f_S(x,y)$ to obtain the value of $f_S$ (with zero error).
\end{proof}

\subsection{Proof of Lemma~\ref{le:two_way_lb}}
In order to lower bound $\CCU_{\frac12 - \epsilon}^{\mu_p}(\calf_q)$, we define a communication problem in the standard distributional complexity setting that can be reduced to the problem of computing $\calf_q$ under contextual uncertainty. The lower bound in Lemma~\ref{le:two_way_lb} is then obtained by proving a lower bound on the communication complexity of the new problem which is defined as follows:
\begin{itemize}
\item \textbf{Inputs:} Alice's input is a pair $(S,x)$ where $S \subseteq [n]$ and $x \in \zo^n$. Bob's input is a pair $(T,y)$ such that $T \subseteq [n]$ and $y \in \zo^n$. 

\item \textbf{Distribution:} Let $\cald_q$ be a distribution on pairs of Boolean functions $(f,g)$ on $\zo^n \times \zo^n$ defined by the following sampling procedure. To sample $(f,g) \sim \cald_q$, we pick a set $S \subseteq[n]$ uniformly at random and set $f = f_S$. We then pick $T$ to be a $(q/2)$-noisy copy of $S$ and set $g = f_T$. The distribution on the inputs of Alice and Bob is then described by $\nu_{p,q} = \cald_q \times \mu_p$: we sample $(x,y) \sim \mu_p$ and sample $(f,g) \sim \cald_q$. 

\item \textbf{Function:} The goal is to compute the function $F$ given by $F( (S,x), (T,y)) \triangleq f_T(x \oplus y).$
\end{itemize}

Proposition~\ref{prop:reduction_lem} below -- which follows from a simple Chernoff bound -- shows that a protocol computing $\calf_q$ under $\mu_p$ can also be used to compute the function $F$ in the standard distributional model with $((S,x), (T,y)) \sim \nu_{p,q}$, and with the same amount of communication.


\begin{proposition}\label{prop:reduction_lem}
For every $\epsilon > 0$, it holds that $\CCU_{\frac12 - \epsilon}^{\mu_p}(\calf_q) \geq \CC^{\nu_{p,q}}_{\frac 12 - \epsilon + \epsilon'}(F)$ with $\epsilon' = 2^{-\Theta(q \cdot n)}$.
\end{proposition}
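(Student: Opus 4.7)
My plan is to establish Proposition~\ref{prop:reduction_lem} by a direct simulation argument: any protocol $\Pi$ solving $\CCU_{\frac12-\epsilon}^{\mu_p}(\calf_q)$ can be used, without any modification, as a protocol for $F$ in the standard distributional model under $\nu_{p,q}$. Concretely, given the input $((S,x),(T,y))$ to the $F$-problem, Alice constructs $f=f_S$ from her input $S$, Bob constructs $g=f_T$ from his input $T$, and they simulate $\Pi$ on inputs $(f,x)$ and $(g,y)$ respectively, outputting whatever $\Pi$ outputs. The communication cost is exactly that of $\Pi$, so it remains only to bound the error.

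To bound the error, the key step is to show that $(f_S,f_T)\in\calf_q$ with overwhelming probability under $\cald_q$. Since $T$ is a $(q/2)$-noisy copy of $S$, the size $|S\triangle T|$ is a sum of $n$ i.i.d.\ Bernoulli$(q/2)$ random variables, whose expectation is $(q/2)n$. A direct application of the Chernoff bound (Proposition~\ref{prop:chernoff}) with multiplicative deviation $\delta=1$ yields
\[
\Pr_{(f,g)\sim\cald_q}\left[(f_S,f_T)\notin\calf_q\right]
=\Pr\left[|S\triangle T|>qn\right]\leq\exp\left(-\tfrac{(q/2)n}{3}\right)=2^{-\Theta(qn)}.
\]
Call this quantity $\epsilon'$. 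Note that the draw of $(f,g)\sim\cald_q$ is independent of the draw of $(x,y)\sim\mu_p$, since $\nu_{p,q}=\cald_q\times\mu_p$ is a product.

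Now I would decompose the total error of the simulation as follows. Since $F((S,x),(T,y))=f_T(x\oplus y)=g(x,y)$ by construction,
\[
\Pr_{\nu_{p,q},\,\text{coins}}\!\left[\Pi\neq F\right]
\leq \Pr\!\left[(f,g)\notin\calf_q\right]+\Pr\!\left[\Pi\neq g(x,y)\,\Big|\,(f,g)\in\calf_q\right].
\]
The first term is at most $\epsilon'$ by the Chernoff bound above. For the second term, condition on any particular pair $(f,g)\in\calf_q$; by the definition of $\CCU_{\frac12-\epsilon}^{\mu_p}(\calf_q)$, the protocol $\Pi$ errs with probability at most $\frac12-\epsilon$ on this pair over $(x,y)\sim\mu_p$ and its public coins. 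Averaging over the conditional distribution of $(f,g)$ given $\calf_q$ preserves this bound, so the second term is at most $\frac12-\epsilon$. Summing, the simulation errs with probability at most $\frac12-\epsilon+\epsilon'$, which gives the claimed bound $\CC^{\nu_{p,q}}_{\frac12-\epsilon+\epsilon'}(F)\leq\CCU^{\mu_p}_{\frac12-\epsilon}(\calf_q)$.

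There is essentially no obstacle here: the only substantive ingredient is the Chernoff concentration of $|S\triangle T|$ around its mean $(q/2)n$, which gives the required exponentially small slack $\epsilon'=2^{-\Theta(qn)}$. The product structure $\nu_{p,q}=\cald_q\times\mu_p$ is what makes the conditioning on $\calf_q$ harmless for the $(x,y)$-marginal, and the identity $g(x,y)=f_T(x\oplus y)=F((S,x),(T,y))$ is what makes the simulation correct whenever $\Pi$ is.
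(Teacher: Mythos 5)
Your proof is correct and is exactly the argument the paper intends: the paper gives no explicit proof of Proposition~\ref{prop:reduction_lem}, remarking only that it ``follows from a simple Chernoff bound,'' and your simulation plus the Chernoff estimate $\Pr[|S\triangle T|>qn]\leq \exp(-(q/2)n/3)$, combined with the product structure $\nu_{p,q}=\cald_q\times\mu_p$, fills in that sketch precisely. The only (standard, cosmetic) details left implicit are that the resulting public-coin protocol for $F$ can be derandomized by fixing the coins, and that one truncates the simulation at the worst-case cost of $\Pi$ over $\calf_q$, charging the (already counted) bad event $(f_S,f_T)\notin\calf_q$ to the error.
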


In the rest of this section, we will prove the following lower bound on $\CC^{\nu_{p,q}}_{\frac{1}{2}-\epsilon}(F)$, which along with Proposition~\ref{prop:reduction_lem}, implies Lemma~\ref{le:two_way_lb}:
\begin{lemma}\label{le:ucc_lb}
For every $\epsilon > 0$, it holds that
$$\CC^{\nu_{p,q}}_{\frac 12 - \epsilon}(F) \geq \gamma \cdot \min\{p \cdot n,(q/2)\cdot n\} - \log(1/\epsilon),$$
where $\gamma > 0$ is an absolute constant.
\end{lemma}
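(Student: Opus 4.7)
The plan is to invoke the discrepancy method. Let $M \in \mathbb{R}^{4^n \times 4^n}$ be the sign-weighted communication matrix with entries $M[(S,x),(T,y)] = \nu_{p,q}((S,x),(T,y)) \cdot (-1)^{F((S,x),(T,y))}$. The standard discrepancy lower bound gives $\CC^{\nu_{p,q}}_{1/2-\epsilon}(F) \ge \log(2\epsilon/\Disc_{\nu_{p,q}}(F))$, and applying Cauchy--Schwarz to indicator vectors of rectangles yields $\Disc_{\nu_{p,q}}(F) \le 4^n \|M\|_{\mathrm{op}}$. The key structural observation is that both $\nu_{p,q}$ and the sign $(-1)^F$ factorize across coordinates: the tuples $(S_i, x_i, T_i, y_i)$ are mutually independent under $\nu_{p,q}$, and $(-1)^{F((S,x),(T,y))} = \prod_{i=1}^n (-1)^{T_i(x_i \oplus y_i)}$. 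Hence $M = M_0^{\otimes n}$ for an explicit single-coordinate $4 \times 4$ matrix $M_0$ indexed by $(S_i,x_i), (T_i,y_i) \in \{0,1\}^2$, and $\|M\|_{\mathrm{op}} = \|M_0\|_{\mathrm{op}}^n$.

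The heart of the argument is to prove $\|M_0\|_{\mathrm{op}} \le \tfrac{1}{4}\exp(-c_0 \min(p,q/2))$ for an absolute constant $c_0 > 0$. Writing $a = 1-q/2$, $b = q/2$, $c = 1-p$, $d = p$, and using the Hadamard diagonalization of the matrices $cI \pm dJ'$ (with $J' = \bigl(\begin{smallmatrix}0 & 1 \\ 1 & 0\end{smallmatrix}\bigr)$) on the $y$-factor followed by a permutation of the basis, one can exhibit $M_0$, up to orthogonal conjugation, as a block-diagonal matrix with two $2 \times 2$ blocks $B_1, B_2$ related by the similarity $B_2 = J' B_1 J'$, so that $\|M_0\|_{\mathrm{op}} = \|B_1\|_{\mathrm{op}}$. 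A direct calculation gives $\mathrm{tr}(B_1 B_1^\top) = (a^2+b^2)(1+r)/16$ and $\det(B_1 B_1^\top) = r(a^2-b^2)^2/256$ with $r = (1-2p)^2$, from which $\|B_1\|_{\mathrm{op}}^2 = \lambda_{\max}(B_1 B_1^\top)$ is available in closed form. Plugging the resulting bound into $\|M\|_{\mathrm{op}} = \|M_0\|_{\mathrm{op}}^n$ yields $\Disc_{\nu_{p,q}}(F) \le \exp(-c_0 n \min(p,q/2))$, giving Lemma~\ref{le:ucc_lb} with $\gamma = c_0/\ln 2$.

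The main obstacle is the spectral estimate in the second paragraph. The trivial bound $\|M_0\|_{\mathrm{op}} \le 1/4$ (attained when $p = q = 0$) gives only the useless discrepancy estimate $\Disc \le 1$. What is needed is strict decay below $1/4$ at a rate \emph{linear} in $\min(p, q/2)$, rather than merely quadratic. Tracking the leading cancellation in the discriminant $\mathrm{tr}(B_1 B_1^\top)^2 - 4\det(B_1 B_1^\top)$ ultimately reduces matters to the elementary inequality $2p + q - \sqrt{4p^2 + q^2} \ge \min(p, q/2)$, which I would verify by squaring both sides and case-splitting on whether $p \le q/2$ or $p > q/2$. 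This linear decay is exactly what yields the $\min(pn, qn/2)$ scaling asserted in the lower bound.
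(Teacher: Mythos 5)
Your proposal follows the same skeleton as the paper's proof --- the discrepancy bound, the Cauchy--Schwarz estimate $\Disc_{\nu_{p,q}}(F)\le 4^n\|M\|$, and the tensorization $M=M_0^{\otimes n}$ with $\|M\|=\|M_0\|^n$ --- but you carry out the one substantive step, the spectral bound on the single-coordinate $4\times 4$ matrix, by a genuinely different and somewhat more general route. The paper first reduces without loss of generality to $q=2p$ (Alice privately injects noise into whichever of $(S,T)$ or $(x,y)$ is less noisy), factors out the scalar to get a one-parameter matrix $N(a)$ with $a=p/(1-p)$, and bounds $\|N\|$ by exhibiting the eigenvectors of $N^TN$ explicitly (Lemma~\ref{le:sp_norm_bd}); the crux there is that the coefficient $\sqrt2$ of $a$ is strictly less than $2$, so $(1-p)^2\|N\|$ decays linearly in $p$. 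You instead keep $(p,q)$ general, conjugate by $I_2\otimes H$ to diagonalize the $(x,y)$-blocks $cI\pm dJ'$, and observe that after reordering the basis the matrix splits into two $2\times2$ blocks swapped by $J'$-conjugation. I verified the key computations: $B_1=\frac14\bigl(\begin{smallmatrix}a&b\\ b&a\end{smallmatrix}\bigr)\mathrm{diag}(1,1-2p)$, so your trace and determinant formulas are exact; the discriminant factors as $\mathrm{tr}^2-4\det=\frac{1}{1024}\left[(s-u)^2+(1-su)^2\right]\left[(s+u)^2+(1+su)^2\right]$ with $s=1-2p$, $u=1-q$, which gives $\lambda_{\max}(B_1B_1^\top)\approx\frac1{16}\left(1-(2p+q-\sqrt{4p^2+q^2})\right)$ to leading order; and your closing inequality $2p+q-\sqrt{4p^2+q^2}\ge\min(p,q/2)$ does hold (the case split and squaring reduce it to $\beta\ge\tfrac34\alpha$ when $\alpha=\min(2p,q)$). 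What your route buys is that it dispenses with the $q=2p$ reduction and yields the $\min(pn,qn/2)$ scaling directly from a clean two-variable inequality; what the paper's route buys is that only a one-parameter family of matrices needs analyzing. The only thing left implicit in your sketch is the second-order bookkeeping needed to convert the leading-order expansion into an honest bound for all small $p,q$ (the paper likewise restricts to $p\le 1/10$), but that is routine given your exact trace and determinant formulas.
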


To prove Lemma~\ref{le:ucc_lb}, without loss of generality, we will set $q = 2p$ and prove a lower bound of $\gamma \cdot p \cdot n$ on the communication complexity\footnote{We can do so because $\CC^{\nu_{p,q}}_{\frac 12 - \epsilon}(F) \geq \CC^{\nu_{r,2r}}_{\frac 12 - \epsilon}(F)$ with $r \triangleq \min(p,q/2)$, which follows from the fact that Alice can always use her private randomness to reduce the correlation between either $(x,y)$ or $(S,T)$.}. So henceforth, we denote $\nu_{p} \triangleq \nu_{p,2p}$. The proof will use the \emph{discrepancy bound} which is a well-known method for proving lower bounds on distributional communciation complexity in the standard model.

\begin{definition}[Discrepancy; \cite{KN97}]
Let $F$ and $\nu_p$ be as above and let $R$ be any rectangle (i.e., any set of the form $R = C \times D$ where $C,D \subseteq \{0,1\}^{2n}$). Denote
\begin{align*}
\Disc_{\nu_p}(R,F) \triangleq
 \bigg| &\Pr_{\nu_p}\big[F((S,x),(T,y)) = 0, ((S,x),(T,y)) \in R\big] - \\
 & \Pr_{\nu_p}\big[F((S,x),(T,y)) = 1, ((S,x),(T,y)) \in R\big] \bigg|.
\end{align*}
The discrepancy of $F$ according to $\nu_p$ is $\Disc_{\nu_p}(F) \triangleq \max_{R} \Disc_{\nu_p}(R,F)$ where the maximum is over all rectangles $R$.
\end{definition}

The next known proposition relates distributional communication complexity to discrepancy.

\begin{proposition}[\cite{KN97}]\label{prop:disc_bd}
For any $\epsilon > 0$, it holds that $ \CC^{\nu_p}_{\frac 12 - \epsilon}(F) \geq \log(2\epsilon / \Disc_{\nu_p}(F))$.
\end{proposition}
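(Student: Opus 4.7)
The plan is to reconstruct the standard argument that relates protocol advantage to the sum of discrepancies over protocol leaves. Since the communication model allows shared randomness but we are in a distributional setting, by Yao's minimax principle it suffices to lower bound the communication of deterministic protocols that err with probability at most $\tfrac12 - \epsilon$ under $\nu_p$. So I would fix such a deterministic protocol $\Pi$ with communication cost $c$ and aim to show $c \geq \log(2\epsilon/\Disc_{\nu_p}(F))$.

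The key structural fact I would invoke is that a deterministic $c$-bit protocol partitions the input space into at most $2^c$ combinatorial rectangles $R_1, \dots, R_L$ (with $L \le 2^c$), one per distinct transcript, and on each rectangle the protocol outputs a fixed bit $b_i \in \{0,1\}$. I would then express the advantage of $\Pi$ as
\begin{align*}
2\epsilon \;\leq\; 2 \Pr_{\nu_p}[\Pi = F] - 1 \;=\; \sum_{i=1}^{L} \Bigl( \Pr_{\nu_p}[F = b_i,\, \mathrm{in}\, R_i] - \Pr_{\nu_p}[F = 1-b_i,\, \mathrm{in}\, R_i] \Bigr),
\end{align*}
using the fact that the rectangles partition the input space so that $\sum_i \Pr[\mathrm{in}\,R_i] = 1$.

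Next I would bound each summand in absolute value by $\Disc_{\nu_p}(R_i, F)$, which is immediate from the definition since flipping $b_i$ only changes the sign of the inner quantity. Summing and using $L \leq 2^c$ together with $\Disc_{\nu_p}(R_i, F) \leq \Disc_{\nu_p}(F)$ for every $i$, I would conclude
\begin{align*}
2\epsilon \;\leq\; \sum_{i=1}^{L} \Disc_{\nu_p}(R_i, F) \;\leq\; 2^c \cdot \Disc_{\nu_p}(F).
\end{align*}
Rearranging and taking logarithms gives $c \geq \log(2\epsilon / \Disc_{\nu_p}(F))$, which is exactly the claimed bound.

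There is no real obstacle here since the statement is a textbook fact; the only mildly delicate step is the rectangle partitioning, which requires one to recall that for a deterministic protocol, the set of $((S,x),(T,y))$ inducing any fixed transcript is a combinatorial rectangle — a standard consequence of the fact that at each node the next bit sent depends only on one party's input together with the transcript so far. Once this is in hand, the rest is bookkeeping: turning a worst-case bound on each rectangle into a global bound via the trivial inequality $\Disc_{\nu_p}(R_i, F) \leq \Disc_{\nu_p}(F)$.
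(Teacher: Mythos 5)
Your argument is correct and is precisely the standard discrepancy lower bound from \cite{KN97}, which the paper invokes without reproving: fix the shared randomness by averaging, decompose the resulting deterministic $c$-bit protocol into at most $2^c$ transcript rectangles each with a constant output bit, bound the advantage $2\epsilon$ by the sum of the rectangle discrepancies, and conclude $2\epsilon \le 2^c \cdot \Disc_{\nu_p}(F)$. The only convention worth noting is that the output must be determined by the transcript alone (i.e.\ constant on each leaf rectangle), which is the standard setup in \cite{KN97} and otherwise costs at most one additional bit of communication.
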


We will prove the following lemma.

\begin{lemma}\label{le:disc_up_bd}
$\Disc_{\nu_p}(F) \le 2^{-\gamma \cdot p \cdot n}$ for some absolute constant $\gamma > 0$.
\end{lemma}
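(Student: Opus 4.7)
The plan is to invoke the standard spectral upper bound on discrepancy. Let $M$ be the $2^{2n}\times 2^{2n}$ real matrix with entries $M[(S,x),(T,y)] \triangleq \nu_p((S,x),(T,y))\cdot (-1)^{F((S,x),(T,y))}$. For any rectangle $R = C \times D$, the signed mass of $R$ equals $\mathbf{1}_C^\top M \mathbf{1}_D$, so by Cauchy--Schwarz,
$$ \Disc_{\nu_p}(R,F) \;=\; \bigl|\mathbf{1}_C^\top M \mathbf{1}_D\bigr| \;\le\; \|\mathbf{1}_C\|_2 \cdot \|M\| \cdot \|\mathbf{1}_D\|_2 \;\le\; 2^{2n} \|M\|, $$
where $\|M\|$ denotes the spectral norm. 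It therefore suffices to prove $\|M\| \le 4^{-n}\cdot 2^{-\gamma p n}$.

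The next step is to exploit the product structure of $\nu_p$ and $F$. Since $\nu_p$ is a product over coordinates $i\in[n]$ of the joint law of $(S_i,x_i,T_i,y_i)\in\{0,1\}^4$, and $(-1)^F = \prod_{i=1}^n (-1)^{T_i(x_i\oplus y_i)}$, identifying Alice's input with $(S_i,x_i)_{i=1}^n \in (\{0,1\}^2)^n$ and similarly for Bob yields the Kronecker factorization $M = N^{\otimes n}$, where $N$ is the $4\times 4$ matrix
$$ N[(s,a),(t,b)] \;\triangleq\; \tfrac14\, \Pr[t\mid s]\, \Pr[b\mid a]\, (-1)^{t(a\oplus b)}. $$
By multiplicativity of the spectral norm under Kronecker products, $\|M\| = \|N\|^n$, so the whole task collapses to the one-coordinate bound $\|N\| \le \tfrac14(1-\Omega(p))$.

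The heart of the proof is this $4\times 4$ spectral computation. Setting $\alpha = 1-p$, $\beta = p$, $c = \alpha\beta = p(1-p)$, and $d = \alpha-\beta = 1-2p$, writing out $(4N)^\top(4N)$ reveals that the orthogonal two-dimensional subspaces $V_+ = \{(a,a,b,b):a,b\in\R\}$ and $V_- = \{(a,-a,b,-b):a,b\in\R\}$ are both invariant; on each, $(4N)^\top(4N)$ restricts to a $2\times 2$ matrix with trace $2(1-2c)^2$ and determinant $d^4$ (the latter relying on the identity $1-4c = d^2$). Thus the squared singular values of $4N$ are $(1-2c)^2 \pm \sqrt{(1-2c)^4 - d^4}$, each with multiplicity two. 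Using the further identity $(1-2c)^2 - d^2 = 4c^2$, one gets $\sqrt{(1-2c)^4 - d^4} = 2c\sqrt{(1-2c)^2 + d^2} \le 2\sqrt 2\, c$, and hence $\|4N\|^2 \le (1-2c)^2 + 2\sqrt 2\, c = 1 - (4-2\sqrt 2)c + O(c^2) \le 1 - \Omega(p)$.

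Putting everything together, $\Disc_{\nu_p}(F) \le 2^{2n}\cdot 4^{-n}\cdot \|4N\|^n \le (1 - \Omega(p))^n \le 2^{-\gamma pn}$ for an absolute constant $\gamma>0$, as claimed. The main obstacle is the third step: one has to guess the invariant subspaces $V_\pm$ (which is what makes the off-diagonal structure of $N^\top N$ tractable) and verify the clean trace/determinant formulas, after which the spectral estimate is elementary.
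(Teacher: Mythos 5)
Your proposal is correct and follows essentially the same route as the paper: bound $\Disc_{\nu_p}(R,F)$ by $2^{2n}\|M\|$, factor $M$ as an $n$-fold tensor power of a $4\times 4$ matrix, and compute that matrix's largest singular value, with the whole argument hinging on the coefficient $\sqrt{2}<2$ (equivalently your $4-2\sqrt{2}>0$) beating the $(1-p)^{2}$ prefactor. The only difference is cosmetic: the paper exhibits explicit eigenvectors of $N^{T}N$ in the variable $a=p/(1-p)$, whereas you diagonalize via the invariant subspaces $V_{\pm}$ and trace/determinant identities; your squared singular values $(1-2c)^2\pm 2c\sqrt{(1-2c)^2+d^2}$ agree exactly with the paper's $(1-p)^4\lambda_{1,2}(a)$.
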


Note that Lemma~\ref{le:disc_up_bd} and Proposition~\ref{prop:disc_bd} put together immediately imply Lemma~\ref{le:ucc_lb}. The proof of Lemma~\ref{le:disc_up_bd} uses some standard facts about the spectral properties of matrices and their tensor powers that we next recall. Let $A \in \mathbb{R}^{d \times d}$ be a real square matrix. Then, $v \in \mathbb{R}^d$ is said to be an eigenvector of $A$ with eigenvalue $\lambda \in \mathbb{R}$ if $Av = \lambda v$. If $A$ is furthermore (symmetric) positive semi-definite, then all its eigenvalues are real and non-negative. We can now define the spectral norm of a (not necessarily symmetric) matrix.

\begin{definition}
The spectral norm of a matrix $A \in \mathbb{R}^{d \times d}$ is given by $\| A \| \triangleq \sqrt{\lambda_{max}(A^T A)}$, where $\lambda_{max}(A^T A)$ is the largest eigenvalue of $A^T A$.
\end{definition}

Also, recall that given a matrix $A \in \mathbb{R}^{d \times d}$ and a positive integer $t$, the tensor power matrix $A^{\otimes t} \in \mathbb{R}^{d^t \times d^t}$ is defined by $(A^{\otimes t})_{(i_1,\dots,i_t)} = \prod_{j = 1}^t A_{i_j}$ for every $(i_1,\dots,i_t) \in \{1,\dots,d\}^t$. We will use the following standard fact which in particular says that the spectral norm is multiplicative with respect to tensoring.

\begin{fact}\label{fa:spec_prop}
For any matrix $A \in \mathbb{R}^{d \times d}$, vector $u \in \mathbb{R}^d$, scalar $c \in \mathbb{R}$ and positive integer $t$, we have
\begin{enumerate}
\item $\| c A \| = |c| \cdot \|A \|$.
\item $\| A^{\otimes t} \| = \| A \|^t$.
\item $\| A u \|_2 \le \| A \| \cdot \| u \|_2$, where for any vector $w \in \mathbb{R}^d$,  $\| w \|_2$ denotes the \emph{Euclidean norm} of $w$, i.e., $\| w \|_2 \triangleq \sqrt{ \sum_{i=1}^d w_i^2}$. 
\end{enumerate}
\end{fact}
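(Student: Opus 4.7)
The plan is to establish each of the three items by unwinding the definition $\|A\| = \sqrt{\lambda_{\max}(A^T A)}$ and invoking the spectral theorem for the symmetric positive semi-definite matrix $A^T A$. I will treat the items in a convenient order: first (1), then (3), and finally (2), since (2) is cleanest to state after the tensor-product algebra has been set up.

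For item (1), I would compute directly: $(cA)^T(cA) = c^2\, A^T A$, so every eigenvalue of $(cA)^T(cA)$ is $c^2$ times a corresponding eigenvalue of $A^T A$. Taking the maximum and then the square root yields $\|cA\| = \sqrt{c^2\,\lambda_{\max}(A^T A)} = |c|\cdot \|A\|$. No subtlety here beyond the fact that eigenvalues scale by a constant when the matrix is multiplied by that constant.

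For item (3), I would use the spectral theorem on $A^T A$, which is symmetric and positive semi-definite, to obtain an orthonormal eigenbasis $v_1,\ldots,v_d$ with non-negative eigenvalues $\lambda_1,\ldots,\lambda_d$. Writing $u = \sum_i c_i v_i$, I compute
\begin{equation*}
\|Au\|_2^2 = \langle Au, Au\rangle = \langle u, A^T A u\rangle = \sum_i c_i^2 \lambda_i \le \lambda_{\max}(A^T A) \sum_i c_i^2 = \|A\|^2\cdot \|u\|_2^2,
\end{equation*}
and taking square roots gives the claim.

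For item (2), I would use the two standard tensor identities $(M\otimes N)^T = M^T \otimes N^T$ and $(M\otimes N)(P\otimes Q) = (MP)\otimes (NQ)$, which together imply $(A^{\otimes t})^T(A^{\otimes t}) = (A^T A)^{\otimes t}$. Then I would note that if $v_1,\ldots,v_d$ is an orthonormal eigenbasis of $A^T A$ with eigenvalues $\lambda_1,\ldots,\lambda_d$, the $d^t$ tensors $v_{i_1}\otimes\cdots\otimes v_{i_t}$ form an orthonormal eigenbasis of $(A^T A)^{\otimes t}$ with eigenvalues $\prod_{j=1}^t \lambda_{i_j}$; hence $\lambda_{\max}((A^T A)^{\otimes t}) = \lambda_{\max}(A^T A)^t$. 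Taking square roots yields $\|A^{\otimes t}\| = \|A\|^t$. The only step requiring a moment's care is verifying the tensor-product identities above, but these are purely formal and standard. No real obstacle is anticipated; the entire fact is a bookkeeping exercise once the spectral decomposition of $A^T A$ is in hand.
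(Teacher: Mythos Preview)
Your proof is correct and complete. The paper, however, does not actually prove this statement: it is labeled a ``standard fact'' and no argument is given. So there is nothing to compare against; your write-up simply supplies the routine verification that the paper chose to omit.
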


The next lemma upper bounds the spectral norm of an explicit family of $4 \times 4$ matrices that will be used in the proof of Lemma~\ref{le:disc_up_bd}. Looking ahead, it is crucial for our purposes that the coefficient of $a$ in the right-hand side of \Cref{eq:full_N_bd} is a constant strictly smaller than $2$.
\begin{lemma}\label{le:sp_norm_bd}
Let $a \in (0,1)$ be a real number and $N \triangleq N(a) \triangleq \left[ \begin{array}{cccc}1 & a & a & -a^2 \\ 
a & 1 & -a^2 & a \\ 
a & a^2 & 1 & -a \\ 
a^2 & a & -a & 1 \end{array} \right]$. Then,
\begin{equation}\label{eq:full_N_bd}
\| N \|_2 \le 1 + \sqrt{2} \cdot a + a^2 + \frac{a^4}{2} + \frac{a^5}{\sqrt{2}}.
\end{equation}
\end{lemma}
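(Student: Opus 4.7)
My plan is to decompose $N(a)$ by powers of $a$ and reduce the problem to two elementary spectral-norm computations on fixed $4\times 4$ matrices, which are then combined via the triangle inequality for the spectral norm. Concretely, I would write $N(a) = I + a \cdot A_1 + a^2 \cdot A_2$, where the coefficient of $a$ is the symmetric $\{-1,0,1\}$-matrix
\[
A_1 \;=\; \begin{pmatrix} 0 & 1 & 1 & 0 \\ 1 & 0 & 0 & 1 \\ 1 & 0 & 0 & -1 \\ 0 & 1 & -1 & 0 \end{pmatrix},
\]
and the coefficient of $a^2$ is the signed permutation
\[
A_2 \;=\; \begin{pmatrix} 0 & 0 & 0 & -1 \\ 0 & 0 & -1 & 0 \\ 0 & 1 & 0 & 0 \\ 1 & 0 & 0 & 0 \end{pmatrix}.
\]

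Next I would bound each piece. Since $A_1$ is symmetric, $\|A_1\|^2 = \lambda_{\max}(A_1^2)$, and a short check that the four rows of $A_1$ are pairwise orthogonal with squared $\ell_2$-norm $2$ shows $A_1^2 = 2I$, giving $\|A_1\| = \sqrt 2$. For $A_2$, each row and each column has exactly one nonzero entry, equal to $\pm 1$, so $A_2$ is a signed permutation matrix and $A_2^\top A_2 = I$, giving $\|A_2\| = 1$. Combining via the triangle inequality and the homogeneity property in Fact~\ref{fa:spec_prop}, I would conclude
\[
\|N(a)\| \;\le\; \|I\| + a \cdot \|A_1\| + a^2 \cdot \|A_2\| \;=\; 1 + \sqrt{2}\, a + a^2,
\]
which is at most the right-hand side of~\eqref{eq:full_N_bd} because the remaining terms $a^4/2 + a^5/\sqrt 2$ are nonnegative.

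I do not anticipate any real obstacle: the only nontrivial verification is the identity $A_1^2 = 2I$, which is a direct $4\times 4$ orthogonality check on rows. If one preferred a proof that tracks the exact form of the stated bound more closely, an alternative route would compute $N N^\top$ directly and observe it equals $(1+a^2)^2 I + 2a\cdot L$; after relabelling the basis as $(1,4,2,3)$, $L$ becomes block antidiagonal $\bigl(\begin{smallmatrix} 0 & B \\ B & 0\end{smallmatrix}\bigr)$ with a $2\times 2$ block $B$ satisfying $B^\top B = 2(1+a^4)I$, yielding $\|N(a)\|^2 = (1+a^2)^2 + 2\sqrt{2}\, a\sqrt{1+a^4}$, after which $\sqrt{P^2+Q}\le P + Q/(2P)$ applied with $P=1+a^2$, followed by $\sqrt{1+a^4}\le 1+a^4/2$, recovers the stated bound essentially term by term.
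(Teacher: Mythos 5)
Your primary argument is correct and takes a genuinely different, and more elementary, route than the paper. The paper proves Lemma~\ref{le:sp_norm_bd} by computing $N^TN$ explicitly, exhibiting all four of its eigenvectors and both eigenvalues $\lambda_1(a),\lambda_2(a)$, identifying $\|N\|=\sqrt{\lambda_1(a)}$, and then relaxing the square root twice via $\sqrt{1+x}\le 1+x/2$. Your decomposition $N=I+aA_1+a^2A_2$ sidesteps the eigenvalue computation entirely: the checks that $A_1$ is symmetric with $A_1^2=2I$ (its rows are pairwise orthogonal of squared norm $2$, so $\|A_1\|=\sqrt2$) and that $A_2$ is a signed permutation (so $\|A_2\|=1$) are both correct, and the triangle inequality for the spectral norm then gives $\|N\|\le 1+\sqrt2\,a+a^2$, which is in fact \emph{stronger} than \Cref{eq:full_N_bd} and still has coefficient $\sqrt2<2$ on the linear term, which is the property the discrepancy bound actually needs. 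The one presentational point is that Fact~\ref{fa:spec_prop} as stated lists homogeneity but not the triangle inequality, so you should explicitly invoke that the spectral norm is a norm (standard, but worth saying since you lean on it). Your sketched alternative is essentially the paper's proof reorganized: writing $N^TN=(1+a^2)^2I+2aL$ and exploiting the block structure of $L$ (with $B^2=2(1+a^4)I$) recovers $\lambda_{\max}(N^TN)\le(1+a^2)^2+2\sqrt2\,a\sqrt{1+a^4}$ without having to guess the eigenvectors, and the subsequent $\sqrt{P^2+Q}\le P+Q/(2P)$ step reproduces the paper's term-by-term bound. Either route is fine; the first is the cleaner proof of the stated inequality, the second is closer to tight.
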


\begin{proof}
One can verify that
$$ N^{T} N = \left[ \begin{array}{cccc} (a^2+1)^2 & 2a(a^2+1) & 2a(1-a^2) & 0 \\ 
2a(a^2+1) & (a^2+1)^2 & 0 & 2a(1-a^2) \\ 
2a(1-a^2) & 0 & (a^2+1)^2 & -2a(a^2+1) \\ 
0 & 2a(1-a^2) & -2a(a^2+1) & (a^2+1)^2 \end{array} \right]. $$
Assuming that $a \in (0,1)$, one can also verify that $N^{T} N$ has as eigenvectors
$$ v_1 \triangleq  \left[ \begin{array}{c}\frac{\sqrt{2(a^4+1)}}{1-a^2} \\ 
\frac{a^2+1}{1-a^2} \\ 
1 \\ 
0 \end{array} \right], ~ v_2 \triangleq \left[ \begin{array}{c} \frac{a^2+1}{1-a^2} \\ 
\frac{\sqrt{2(a^4+1)}}{1-a^2} \\ 
0 \\ 
1 \end{array} \right] \text{ with eigenvalue } \lambda_1(a) \triangleq  2a^2+a^4+2a\sqrt{2(a^4+1)}+1,$$
$$ \text{and } v_3 \triangleq  \left[ \begin{array}{c} \frac{\sqrt{2(a^4+1)}}{a^2-1} \\ 
\frac{a^2+1}{1-a^2} \\ 
1 \\ 
0 \end{array} \right], ~ v_4 \triangleq \left[ \begin{array}{c} \frac{a^2+1}{1-a^2} \\ 
\frac{\sqrt{2(a^4+1)}}{a^2-1} \\ 
0 \\ 
1 \end{array} \right] \text{ with eigenvalue } \lambda_2(a) : = 2a^2+a^4-2a\sqrt{2(a^4+1)}+1.$$

Note that for any value of $a \in (0,1)$, the vectors $v_1$, $v_2$, $v_3$ and $v_4$ are linearly independent and each of the eigenvalues $\lambda_1(a)$ and $\lambda_2(a)$ has multiplicity $2$. Moreover, we have that $\lambda_1(a) \geq \lambda_2(a)$. Hence,
$$\| N \| = \sqrt{\lambda_1(a)} = \sqrt{2a^2+a^4+2a\sqrt{2(a^4+1)}+1}.$$

Applying twice the fact that $\sqrt{1+x} \le 1+x/2$ for any $x \geq -1$, we get that
\begin{align*}
\| N \| &= \sqrt{1+2a^2+a^4+2a\sqrt{2}\sqrt{1+a^4}}\\ 
&\le 1+a^2+\frac{a^4}{2}+a\sqrt{2}\sqrt{1+a^4}\\ 
&\le 1+a^2+\frac{a^4}{2}+a\sqrt{2}(1+\frac{a^4}{2})\\ 
&= 1 + a\sqrt{2} + a^2 + \frac{a^4}{2} + \frac{a^5}{\sqrt{2}}.
\end{align*}

\end{proof}

We are now ready to prove Lemma~\ref{le:disc_up_bd}.

\begin{proofof}{{\bf Lemma~\ref{le:disc_up_bd}}}
Fix any rectangle $R = C \times D$ where $C,D \subseteq \{0,1\}^{2n}$. We wish to show that $\Disc_{\nu_p}(R,F) \le 2^{-\gamma \cdot p \cdot n}$. First, note that $\Disc_{\nu_p}(R,F) = |1_C M 1_D|$ where $1_C$ and $1_D$ are the $0/1$ indicator vectors of $C$ and $D$ (respectively) and $M$ is the $2^{2n} \times 2^{2n}$ real matrix defined by\footnote{We here use the symbols $S$ and $T$ to denote both subsets of $[n]$ and the corresponding $0/1$ indicator vectors.}
\begin{align*}
M_{((S,x),(T,y))} &\triangleq \nu_p((S,T),(x,y)) \cdot (-1)^{f_T(x \oplus y)}\\ 
&= \frac{1}{2^{2n}} (1-p)^{2n} (-1)^{\langle T , x \oplus y \rangle} (\frac{p}{1-p})^{|S \oplus T| + |x \oplus y|}
\end{align*}
for every $S,x,T,y \in \{0,1\}^n$. Letting $a \triangleq p/(1-p)$, we can write
\begin{align*}
M_{((S,x),(T,y))} = \frac{1}{2^{2n}} (1-p)^{2n} (N^{\otimes n})_{((S,x),(T,y))}
\end{align*}
with $N = N(a)$ being the $4 \times 4$ real matrix defined by\footnote{In \Cref{eq:N_def}, $T_1(x_1 \oplus y_1)$ denotes the product of the bit $T_1$ and the bit $(x_1 \oplus y_1)$. Moreover, since $(S_1 \oplus T_1)$ is a single bit, its Hamming weight $|S_1 \oplus T_1|$ is the same as its bit-value, and similarly for $(x_1 \oplus y_1)$.}
\begin{equation}\label{eq:N_def}
N_{((S_1,x_1),(T_1,y_1))} \triangleq (-1)^{T_1(x_1 \oplus y_1)} a^{|S_1 \oplus T_1| + |x_1 \oplus y_1|}
\end{equation}
for all $S_1,x_1,T_1,y_1 \in \{0,1\}$. Using the third property listed in Fact~\ref{fa:spec_prop}, we get
\begin{align}\label{eq:disc_M}
\Disc_{\nu_p}(R,F) = |1_C M 1_D| \le \| 1_C \|_2 \cdot \| M \| \cdot \| 1_D \|_2 \le \sqrt{2^{2n}} \cdot \| M \| \cdot \sqrt{2^{2n}} = 2^{2n} \cdot \| M \|
\end{align}

We now use the first two properties listed in Fact~\ref{fa:spec_prop} to relate $\| M \|$ to $\| N \|$ as follows:
\begin{equation}\label{eq:M_N_bd}
\| M \| = \| \frac{1}{2^{2n}} (1-p)^{2n} N^{\otimes n} \| = \frac{1}{2^{2n}} (1-p)^{2n} \| N \|^n.
\end{equation}

Using \Cref{eq:N_def}, we can check that

$$N = N(a) = \left[ \begin{array}{cccc}1 & a & a & -a^2 \\ 
a & 1 & -a^2 & a \\ 
a & a^2 & 1 & -a \\ 
a^2 & a & -a & 1 \end{array} \right].$$

Applying Lemma~\ref{le:sp_norm_bd} with $a= p/(1-p)$ and $p$ sufficiently small (e.g., less than $1/10$), we get
\begin{equation}\label{eq:N_up_bd}
\| N \| \le 1 + \sqrt{2} \cdot (\frac{p}{1-p}) + O(p^2).
\end{equation}

Combining \Cref{eq:disc_M,eq:M_N_bd,eq:N_up_bd} above, we conclude that
\begin{align*}
\Disc_{\nu_p}(R,F) &\le (1-p)^{2n} \cdot \big(1 + \sqrt{2} \cdot (\frac{p}{1-p}) + O(p^2)\big)^n\\ 
&= \bigg[ (1-p) \cdot \big(1 + p \cdot (\sqrt{2}-1)+O(p^2) \big) \bigg]^n\\ 
&= \bigg[1 - p \cdot (2-\sqrt{2})+O(p^2) \bigg]^n\\ 
&\le 2^{-\gamma \cdot p \cdot n}
\end{align*}
for some absolute constant $\gamma > 0$.
\end{proofof}

\section{Conclusion and Future Directions}\label{sec:conc}

In this work, we introduced and studied a simple model illustrating the role of context in communication and the challenge posed by uncertainty of knowledge of context.

On the technical side, it would be interesting to determine the correct exponent of $I(X;Y)$ in Theorem~\ref{thm:one-way}. Theorems~\ref{thm:one-way} and \ref{thm:main_lb_non_prod} imply that this exponent is between $1/2$ and $1$.

It would also be interesting to prove an analogue of Theorem~\ref{thm:one-way} for two-way protocols. Our proof of Theorem~\ref{thm:one-way} uses in particular the fact that any low-communication one-way protocol in the standard distributional
communication model should have a certain canonical form: to compute $g(x,y)$, Alice tries to describe the entire function $g(x,\cdot)$ to Bob,
and this does not create a huge overhead in communication. Coming up with a canonical form of two-way protocols that somehow changes gradually as we morph from $g$ to $f$ seems to be the essence of the challenge in extending Theorem~\ref{thm:one-way} to the two-way setting.

On the more conceptual side, arguably, the model considered in this work is a fairly realistic one: communication has some goals in mind
which we model by letting Bob be interested in a specific function of the joint information that Alice and
Bob possess. Moreover, it is a fairly natural model to posit that the two are not in perfect synchronization
about the function that Bob is interested in, but Alice can estimate the function in some sense.
One aspect of our model that can be further refined is the specific notion of distance that quantifies the gap between Bob's function and Alice's estimate. In this work, we chose the Hamming distance which forms a good first starting point. We believe that it is interesting to propose
and study other models of distance between functions that more accurately capture natural forms of uncertainty.

Finally, we wish to emphasize the mix of adversarial and probabilistic elements in our uncertainty model --- the adversary picks $(f,g)$ whereas the inputs $(x,y)$ are picked from a distribution. We believe that richer
mixtures of adversarial and probabilistic elements could lead to broader settings of modeling and coping
with uncertainty --- the probabilistic elements offer efficient possibilities that are often immediately ruled out
by adversarial choices, whereas the adversarial elements prevent the probabilistic assumptions from being
too precise.

\bibliographystyle{alpha}
\bibliography{one-way-combined}

\end{document}